\title{Geometry of Resource Interaction -- \textit{A Minimalist Approach}}
\author{Marco Solieri
\thanks{The author is deeply grateful to Michele Pagani and Stefano 
Guerrini for their advice.}\ \ 
\thanks{Partially supported by the ANR project ANR-2010-BLAN-021301 LOGOI.}
\institute{
    Université Paris 13, Sorbonne Paris Cité --
    Laboratoire d'Informatique de Paris-Nord, CNRS --
    Villetaneuse, France\\
    Università di Bologna --
    Dipartimento di Informatica, Scienza e Ingegneria, INRIA --
    Bologna, Italy}
  \email{ms@xt3.it}
}
\theoremstyle{plain}
    \newtheorem{theorem}{Theorem}
    \newtheorem{lemma}{Lemma}
    \newtheorem{corollary}{Corollary}
\theoremstyle{definition}
\newtheorem{definition}{Definition}
\newtheorem{example}{Example}
\theoremstyle{remark}
\renewcommand{\emph}[1]{\textsl{#1}}
\newcommand{\inpt}{\mathsf{in}}
\newcommand{\outpt}{\mathsf{out}}
\newcommand{\opconst}{\bigstar}
\newcommand{\const}{\star}
\newcommand{\wunit}{\star}
\newcommand{\pimp}{\multimap}
\newcommand{\nimp}{\negt{\multimap}}
\newcommand{\negt}[1]{\bar{#1}}
\newcommand{\lltens}{\otimes}
\newcommand{\llpar}{\parr}
\newcommand{\link}[3]{\langle#1 \; (\mathbin{#2}) \; #3\rangle}
\newcommand{\pimplink}[3]{\link {#1,#2} \pimp {#3}}
\newcommand{\nimplink}[3]{\link {#1,#2} \nimp {#3}}
\newcommand{\oclink}[2]{\link {#1}{\oc\,}{#2}}
\newcommand{\wnlink}[2]{\link {#1}{\wn}{#2}}
\newcommand{\constlink}[1]{\link {\!\!} \bigstar {#1}}
\newcommand{\graph}{\mathcal{G}}
\newcommand{\net}{\mathcal{N}}
\newcommand{\met}{\mathcal{M}}
\newcommand{\conc}{\!::\!}
\newcommand{\eqtd}{\!\equiv\!}
\newcommand{\weight}[1]{\mathfrak{w}(#1)}
\newcommand{\epaths}[1]{E(#1)}
\newcommand{\ecpaths}[1]{E^+(#1)}
\newcommand{\exec}[1]{\mathfrak{Ex}(#1)}
\newcommand{\reduce}{\rightarrow}
\newcommand{\nf}[1][\cdot]{\mathsf{NF}{(#1)}}
\newcommand{\simterms}{\Delta}
\newcommand{\simpolterms}{\Delta^!}
\newcommand{\nat}{\mathbf{N}}
\newcommand{\freemod}[2]{#1\langle #2 \rangle}
\newcommand{\terms}{\freemod{\nat}{\simterms}}
\newcommand{\polterms}{\freemod{\nat}{\simpolterms}}
\newcommand{\pretransl}[1]{\llparenthesis #1 \rrparenthesis}
\newcommand{\transl}[1]{\llbracket #1 \rrbracket}
\newcommand{\perm}{\sigma}
\newcommand{\permn}{\sigma_n}
\newcommand{\permns}{{S_n}}
\newcommand{\permms}{{S_m}}
\newcommand{\permgraph}{\sigma_\graph}
\newcommand{\permgraphs}{S_\graph}
\newcommand{\permnet}{\sigma_\net}
\newcommand{\permnets}{S_\net}
\newcommand{\permweight}[2]{\mathfrak{w}^{#1}(#2)}
\newcommand{\sumnet}{\mathcal{S}}
\newcommand{\rlstar}{{\mathfrak{rL}^*}}
\begin{document}
\maketitle


\begin{abstract}
The Resource $\lambda$-calculus is a variation of the $\lambda$-calculus 
where arguments can be superposed and must be linearly used.
Hence it is a model for linear and non-deterministic programming languages, 
and the target language of Taylor-Ehrhard expansion of $\lambda$-terms.
In a strictly typed restriction of the  Resource $\lambda$-calculus, we study 
the notion of path persistence,
and we define a Geometry of Interaction that characterises it,
is invariant under reduction,
and counts addends in normal forms.
\end{abstract}

\section*{Introduction}

\paragraph{Geometry of Interaction}
The dynamics of $\beta$-reduction or cut elimination can be described in a 
purely geometric way ---studying paths in the graphs that represent terms 
or proofs, and looking at those which are \textit{persistent}, i.e. that have a 
residual path in any reduct.
The quest for an effective semantical characterisation of persistence 
separately produced three notions of paths:
\textit{legality}, formulated by topological conditions about symmetries on 
some 
cycles \cite{AspertiLaneve:1995:TCS};
\textit{consistency}, expressed similarly to a token-machine execution 
\cite{GonthierAbadiLevy:1992:optimal} and developed to study the optimal 
reduction;
and \textit{regularity}, defined by a dynamic algebra
\cite{Girard:1989, DanosRegnier:1995}.
The notions are equivalent \cite{AspertiDanosLaneveRegnier:1994}, and their
common core idea ---describing computation by local and asynchronous 
conditions on routing of paths--- inspired the design of efficient parallel 
abstract machines 
\cite[for instance]{Mackie:1995, Pinto:2001}.
More recently, the Geometry of Interaction (GoI) approach has been fruitfully 
employed for semantical investigations which characterised quantitative 
properties of programs, mainly the complexity of their execution time 
\cite{DalLago:2009}.

\paragraph{Taylor-Ehrhard expansion of $\lambda$-terms and the Resource 
Calculus}
Linear Logic's decomposition of the intuitionistic implication unveiled the 
relation between the algebraic concept of linearity to the computational 
property of a function argument to be used exactly once.
Such a decomposition was then applied not only at the level of types, but also 
at the level of terms, in particular extending the $\lambda$-calculus with 
differential constructors and linear combinations of ordinary terms 
\cite{EhrhardRegnier:2003}.
These constructions allow to consider the complete Taylor expansion of a term, 
i.e. the infinite series of all the approximations of the reduction of a term, 
which was thus shown to commute with computation of Böhm trees.
The ideal target language for the expansion was isolated as the Resource 
$\lambda$-calculus (RC), which is a promotion-free restriction of the 
Differential $\lambda$-calculus \cite{EhrhardRegnier:2006}.
Taylor-Ehrhard expansion originated various investigations on quantitative 
semantics, using the concept of power series for describing program evaluation, 
and has been applied in various non-standard models of computation, e.g.
\cite[for instance]{DanosEhrhard:2011,PaganiSelingerValiron:2014}.

\paragraph{Aim and results}
How can the two aforementioned semantics approaches interact?
What is the relation between the GoI's execution formula and the expansion of 
$\beta$-reduction?
We present the first steps towards this direction.
After having concisely introduced RC (§\ref{sec:terms}),
we consider the Resource Interaction Nets (RINs),
that are the type-restricted translation of 
resource terms into Differential Interaction Nets (§\ref{sec:nets}).
We then study the appropriate notion of paths (§\ref{sec:paths}), extending the 
notion of persistence to paths in RINs dealing with the fact that 
the reduct of a term $t$ is a sum of nets $t_1+\ldots+t_n$.
In particular, we observe that every path of $t_i$ has to be a residual of some 
path in $t$, and that the reduction strongly normalises.
Thus, we say a path of $t$ to be persistent whenever it has a residual in at 
least one of the addends of the reduct of $t$.
Restricting the calculus to the constant type, whose only inhabitant is the 
value $\const$, we have $t \reduce \const + \ldots + \const$.
Now there is only one persistent path of $\const$, the trivial one, therefore 
we prove that persistent paths of $t$ are as many as persistent paths of its 
normal form (\autoref{thm:RIN:path-red:bij}).
Furthermore, we define a suitable GoI for RC, in order to characterise 
persistence (§\ref{sec:Ex}).
We define the notion of regularity by $\rlstar$, an appropriate variant of 
the Dynamic Algebra, where exponentials ($\oc$ and $\wn$) become a sort of 
$n$-ary multiplicatives (resp. $\lltens$ and $\llpar$), whose premises are 
not ordered.
Morally, they are the sum of the multiplicatives we obtain by considering all 
the $n!$ permutations of their premises.
We show our algebra is invariant under reduction 
(\autoref{thm:RIN:weight-inv}), 
from which we obtain the equivalence of persistence to regularity
(\autoref{thm:RIN:regular}) and also that the number of addends in a normal 
form is 
equal to the number of regular paths (\autoref{cor:RIN:path-red:number}).

\paragraph{Related works}
In a very closely related work by De Falco \cite{DeFalco:2008}, a GoI 
construction for DINs is formulated.
Besides the similarities in the technical setting of DINs, our formulation 
turns out to be simpler and more effective, mainly thanks to:
(1) the restriction to closed and ground-typed resource nets,
(2) the associative syntax we adopted for exponential links, and
(3) the stronger notion of path we use.
The first simplifies the shape of paths being persistent, because it implies 
that they are palindrome ---they go from the root to the $\const$ and back to 
the root--- and unique in every normal net/term.
The second simplifies the management of the exponential links, because 
it ensures associativity and delimits their dynamics in only one pair of links, 
while in \cite{DeFalco:2008} this property was completely lost and the system 
more verbose.
De Falco uses binary exponential links and introduces a syntactical embedding 
of the sum in nets by mean of binary links of named sums, and then recover 
associativity with an equivalence on nets.
Compared to ours, their choice results in a drastically more complex 
GoI construction, even though the paper hints at the extensibility with 
promotion (corresponding to the full Differential $\lambda$-calculus) or even 
additives.
The third ingredient allows us to consider full reduction, 
i.e. including the annihilating rule, while in \cite{DeFalco:2008} a 
``weak'' variant is studied, where this kind of redexes are frozen, and 
the GoI only characterises the corresponding notion of ``weak-persistence''.
Indeed, we restrict to paths that cross every exponential in the net (we 
prove it is always true, in case of persistence), thus whenever $t \reduce 0$ 
a path necessarily crosses the annihilating redex, and the dynamic algebra is 
able to detect it.

\section{Resource calculus}
\label{sec:terms}

The Resource Calculus is, on one hand, a linear and thus finitary restriction 
of the $\lambda$-calculus: an argument $[s]$ must be used by an application 
$t\ [s]$ exactly once, i.e. it cannot be duplicated nor erased, so every 
reduction enjoys strong normalisation.
On the other hand, it adds non-determinism to the $\lambda$-calculus, because 
the argument is a finite multiset of ordinary terms.
The reduct is then defined as the superposition of all the possible ways of 
substituting each of the arguments, i.e. a sum.
When arguments provided to a function are insufficient or excess the 
function's request, i.e. the number of variable occurrences, then the 
computation is deadlocked and the application reduces to $0$.
We shall omit the ``resource'' qualification in the terminology.

\begin{definition}[Syntax]\label{def:terms:syntax}
Let $\mathbb{V}$ be the grammar of a denumerable set of variable symbols
$x, y, z, \ldots$.
Then, the set $\simterms$ of the \textit{simple terms} and the set 
$\simpolterms$ of \textit{simple polyterms} are inductively and mutually 
generated by the following grammars.
\begin{align}
  \text{Simple terms: }
  \mathbb{M} \ ::= \ 
  \const \;|\;
  \mathbb{V} \;|\;
  \lambda \mathbb{V}.\mathbb{M} \;|\;
  \mathbb{M}\ \mathbb{B}
  &&
  &\text{Simple polyterms: }
  \mathbb{B} \ ::= \ 
    1 \;|\;
    [\mathbb{M}] \;|\;
    \mathbb{B} \cdot \mathbb{B}
\end{align}
Where: $\const$ is the constant dummy value, brackets delimit multisets, 
$\cdot$ is the multiset union (associative and commutative), $1$ is the empty 
multiset (neutral element of $\cdot$).
So that $([x]\cdot 1) \cdot [y] = [x,y]$.
Simple terms are denoted by the lowercase letters of the latin alphabet 
around $t$, polyterms in uppercase letters around $T$.
The set $\terms$ of \textit{terms} (resp. the set $\polterms$ of 
\textit{polyterms}) is the set of finite formal sums of simple terms
(resp. polyterms) over the semiring $\nat$ of natural numbers.
We also assume all syntactic constructors of simple terms and polyterms to be 
extended to sums by (bi-) linearity.
E.g.
$(\lambda x.(2x + y)) [z+4u]$ is a notational convention for
$2(\lambda x.x)[z] + 8(\lambda x.x)[u] + (\lambda x.y)[z]+ 4(\lambda x.y)[u]$.
\end{definition}

\begin{definition}[Reduction]\label{def:RT:reduction}
A redex is a simple term in the form: $(\lambda x. s) T$.
Let the free occurrences of $x$ in $s$ be $\lbrace x_1, \ldots, x_m\rbrace$.
The \textit{reduction} is the relation $\reduce$ between polyterms 
obtained by the context closure and the linear extension to sum of the 
following elementary reduction rule.
\begin{equation}
  \label{eq:RT:red}
    (\lambda x . s) \ [t_1,\ldots, t_n] \reduce
    \begin{dcases}
      \sum_{\permn \in \permns}
        s\ \lbrace
          t_1/x_{\permn(1)}, \ldots, t_n/x_{\permn(n)}
        \rbrace
        &\text{if } n=m\\
      0
        &\text{if } n \neq m
    \end{dcases}
\end{equation}
Where $\permns$ denotes the set of permutations of the first $n$ naturals,
and $\lbrace t/x \rbrace$ is the usual capture-avoiding substitution.
If $t \reduce^* t' \not \reduce$, where $\reduce^*$ is the reflexive transitive 
closure of $\reduce$, we write $\nf[t]=t'$.
\end{definition}

\begin{example}
\label{ex:terms}
Let
  $I = I' = \lambda x . x$
and also let
  $t = \lambda f. f_1 [ f_2 [\const] ]$.
Then
  $t [I,I'] \reduce
    f_1 [ f_2 [\const]] \lbrace I/f_1, I'/f_2 \rbrace + 
    f_1 [ f_2 [\const]] \lbrace I/f_2, I'/f_1 \rbrace$
that is $I [I' [\const]] + I' [I [\const]]$,
normalising to
  $I'[\const] + I [\const]
    \reduce 2\const$.
Note also a case of annihilation in $t[I] \reduce 0$.
Finally, observe that if $s = (\lambda x.\const) T \reduce \const$ then $T$ 
must be $1$ (otherwise $s \reduce 0$).
\end{example}

\section{Resource nets}
\label{sec:nets}

A resource net is a graphical representation of a typed term by means of a 
syntax borrowed from Linear Logic proof nets, where $n$-ary $\wn$ links have 
a symmetrical dual.
The exponential modality is however deprived of promotion, so that it merely 
represents superposition of proofs and contexts.

\subsection{Pre-nets}
\label{sec:nets:prenets}

\begin{definition}[Links] \label{def:link}
Given a denumerable set of symbols called \textit{vertices}, a \textit{link} 
is a triple $(P, K, C)$, where:
$P$ is a sequence of vertices, called premises;
$K$ is a kind, i.e. an element in the set
  $\lbrace \opconst, \pimp, \nimp, \oc, \wn \rbrace$;
$C$ is a singleton of a vertex, called conclusion, disjoint from $P$.
A link $l = ((u_1, \ldots, u_n), \kappa, \lbrace v \rbrace)$ will be denoted 
as $\link{u_1, \ldots, u_n}{\kappa}{v}$, or depicted as in 
\autoref{fig:net:links}.
The \textit{polarity} of a vertex associated by a link is an element in 
$\lbrace \inpt,\outpt \rbrace$
and we say they are opposite,
and the \textit{arity} of a link is the length of its premises' sequence;
both are determined by the link's kind,
as shown in \autoref{fig:net:links}.
When $v \in P(l) \cup C(l)$ for some vertex $v$ and link $l$, we write that 
$v$ is linked by $l$, or $v \in l$.
The exponential links $\oc$ and $\wn$ whose arity is $0$ are 
respectively called co-weakening and weakening.
In the graphical representations, vertices of a link shall be placed following 
the usual convention for graphs of $\lambda$-calculus ($\outpt$s on the top, 
and $\inpt$s on the bottom); the arrow line shall be used to distinguish 
the conclusion of a link.
\end{definition}

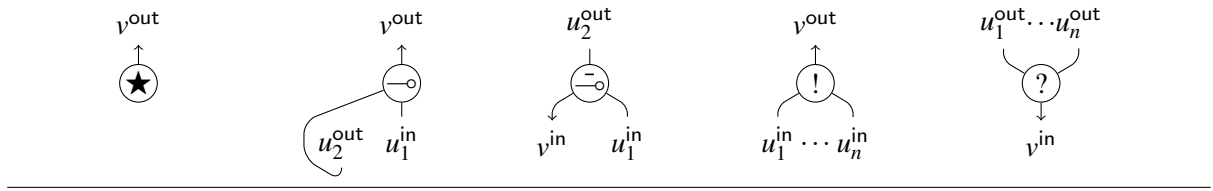
\begin{figure}
  \caption{Links: kind, arity and polarity associated to vertices.}
  \label{fig:net:links}
  \centering
  \begin{tikzpicture}[
    auto, scale=1,
    link/.style = { circle, minimum size=5mm, draw },
    ]
  { [xshift=0cm,yshift=1cm]
    \node[link] (star) at (1,0) {}; \node at (star) {$\bigstar$};
    \node (v) at (1,0.8) {$v^\outpt$};
    \draw[->] (star) -- (v);
  }{ [xshift=3.5cm,yshift=1cm]
    \node[link] (lambda) at (1,0) {}; \node at (lambda) {$\pimp$};
    \node (v) at (1,0.8) {$v^\outpt$};
    \node (u1) at (1,-0.8) {$u_1^\inpt$};
    \node (u2) at (.2,-0.8) {$u_2^\outpt$};
    \draw[->] (lambda) -- (v);
    \draw (lambda) -- (u1);
    \draw [rounded corners=5pt]
      (lambda)
      -- ++(-1.3,-0.5)
      -- ++(0,-0.5)
      -- ++(.5,-.3)
      -- (u2);
  }{ [xshift=6cm,yshift=1cm]
    \node[link] (app) at (1,0) {}; \node at (app) {$\nimp$};
    \node (u2) at (1,0.8) {$u_2^\outpt$};
    \node (u1) at (1.5,-0.8) {$u_1^\inpt$};
    \node (v)  at (.5,-0.8) {$v^\inpt$};
    \draw (app) -- (u2);
    \draw [rounded corners=3pt]
      (app)[->] -- ++(-.5,-.3) -- (v);
    \draw [rounded corners=3pt]
      (app) -- ++(.5,-.3) -- (u1);
  }{ [xshift=9cm,yshift=1cm]
      \node[link] (oc) at (1,0) {}; \node at (oc) {$\oc$};
      \node (v) at (1,0.8) {$v^\outpt$};
      \node (u1)  at (.5,-0.8) {$u_1^\inpt$};
      \node (un) at (1.5,-0.8) {$u_n^\inpt$};
      \node at (1,-0.8) {$\ldots$};
      \draw[->] (oc) -- (v);
      \draw [rounded corners=3pt]
        (oc) -- ++(-.5,-.3) -- (u1);
      \draw [rounded corners=3pt]
        (oc) -- ++(.5,-.3) -- (un);
  }{ [xshift=12cm,yshift=1cm]
      \node[link] (wn) at (1,0) {}; \node at (wn) {$\wn$};
      \node (u1)  at (.5,0.8) {$u_1^\outpt$};
      \node (un) at (1.5,0.8) {$u_n^\outpt$};
      \node (v) at (1,-0.8) {$v^\inpt$};
      \node at (1,0.8) {$\ldots$};
      \draw [rounded corners=3pt]
        (wn) -- ++(-.5,.3) -- (u1);
      \draw [rounded corners=3pt]
        (wn) -- ++(.5,.3) -- (un);
      \draw[->] (wn) -- (v);
    }
  \end{tikzpicture}
\end{figure}

\begin{definition}[Types]\label{def:types}
A \textit{type}, or formula, is a word of the grammar given by
  $\mathbb{T} ::=
  \const \ |\ \mathbb{E} \pimp \mathbb{T}$
and
  $\mathbb{E} ::= \oc \mathbb{T}$,
where $\const$ is the only \textit{ground} type.
A \textit{typing function} $\mathcal{T}$ is a map from vertices to types such 
that, if $A,B$ are types, then $\mathcal{T}$ respects the following constraints.
Constant:
  $\constlink{\const}$.
Linear implications:
  $\pimplink{A}{B} {A\!\!\pimp\!\!B}$ and $\nimplink{A}{B}{A\!\!\pimp\!\!B}$.
Exponentials:
  $\oclink{A, \ldots, A}{\oc A}$ and $\wnlink{A, \ldots, A}{\oc A}$.
\end{definition}

\begin{definition}[Pre-nets] \label{def:prenet}
A \textit{simple pre-net} $\graph$ is a triple $(V, L, \mathcal{T})$,
where $V$ is a set of vertices, $L$ is a set of links and $\mathcal{T}$ a 
typing function on $V$, such that for every vertex $v \in V$ the followings 
holds:
\begin{enumerate}
\item
  there are at least one and at most two links $l, l'$ such that
  $l \ni v \in l'$, and when there is only $l$, then $v$ is called a 
  \textit{conclusion} of $\graph$;
\item 
  the set $C(\graph)$ of conclusions is non empty and when it is the singleton 
  $v$, then $\graph$ is called \textit{closed} and $v$ must be $\outpt$;
\item
  if $l \ni v \in l'$, then $l,l'$ associate opposite polarities to $v$.
\end{enumerate}
We shall write $V(\graph)$ to denote the set $V\in\graph$.
The type of a pre-net $\graph$ is the type $T$ associated to its only $\outpt$ 
conclusion, so we write $\graph:T$.
The \textit{interface} of a simple pre-net $\graph$ is the set $I(\graph)$ of 
all ordered pairs $(T,\mathsf{p})$ such that for all $v \in C (\graph)$, $v$ 
is of type $T$ and has polarity $\mathsf{p}$.
A general \textit{pre-net} is a linear combination of simple pre-nets
$\graph_1 + \ldots + \graph_n$, where for any $1 \leq i,j \leq n$, we have:
$V(\graph_i) \cap V(\graph_j)=\emptyset$ and 
$I(\graph_i)=I(\graph_{i+1})$.
We shall simply use $0$ to denote each of the empty sums of pre-nets having 
the same interface $I$, for every interface $I$.
\end{definition}

\subsection{Term translation and net reduction}
\label{sec:nets:transl}

As the usual translation of the $\lambda$-calculus into MELL proof nets, 
the $\pimp$-link is used for translating $\lambda$-abstraction, the 
$\nimp$-link for application, and the $\wn$-link for contracting together all 
the occurrences of the same variable.
In addition, we use $\oc$-link for polyterm and formal sum of nets for\ldots 
formal sum of terms.

\begin{definition}[Term translation]\label{def:RIN:transl}
Given a simple term $t$, the \textit{translation} $\transl{t}$ is a pre-net 
having one $\outpt$ conclusion and a possibly empty set of $\inpt$ conclusions.
The translation is defined in \autoref{fig:net:transl} where: the final step 
only adds a $\wn$-link on every occurrence of a free variable $x$, for all free 
variables of $t$; and the actual work is performed by the $\pretransl{t}$,
by induction on the syntax of $t$.
Moreover a sum of simple terms is translated to the sum of their translation, 
i.e.:
$\transl{t_1 + \ldots + t_k} = \transl{t_1}+\ldots+\transl{t_n}$.
\end{definition}

\begin{figure}
  \caption{Pre-translation and translation of simple terms into simple nets.}
  \label{fig:net:transl}
  \centering
  \begin{tikzpicture}[auto, scale=1,
    subnet/.style = {
      rectangle,
      rounded corners=5pt,
      very thick,
      draw
    },
    link/.style = { circle, minimum size=5mm, draw },
    contr/.style = {
      isosceles triangle,
      isosceles triangle apex angle=60,
      draw
  }]
  { [xshift=0.4cm,yshift=5.1cm]
    \node at (0.1,0.3) {$\pretransl{\const} =$};
    \node[link] (star) at (1,0) {}; \node at (star) {$\bigstar$};
    \node (v) at (1,0.7) {$v$};
    \draw [->] (star) -- (v);
  }{ [xshift=0.5cm,yshift=4cm]
    \node at (0.9,0) {$\pretransl{\lambda x.t} =$};
    \node[subnet] (sub) at (3,0) {$\ \quad\pretransl{t}\quad\ $};
    \node[link] (lambda) at (3,1) {}; \node at (lambda) {$\pimp$};
    \node (v) at (3,1.8) {$v$};
    \node[link] (contr) at (3,-1.4) {}; \node at (contr) {$\wn$};
    \draw[->] (lambda) -- (v);
    \draw (lambda) -- node {$u_1$} (sub);
    \draw[<-, rounded corners=10pt]
      (lambda) -- ++(-1.2,-0.7) -- ++(0,-1) --
      (3,-2) node {$u_2$} -- (contr);
    \draw[rounded corners=10pt]
      (sub.west) -- ++(0,-0.5) -- (contr);
    \node at (3,-0.6) {$\!\!w_1 \ldots w_n$};
    \draw[rounded corners=10pt]
      (sub.east) -- ++(0,-0.5) -- (contr);
  }{ [xshift=4.7cm,yshift=3.8cm]
    \node at (1,1) {$\pretransl{[s_1, \ldots s_n]} =$};
    \node[subnet] (sub1) at (1.9,0) {$\quad\pretransl{s_1}\quad$};
    \node at (3,0) {$\ldots$};
    \node[subnet] (subn) at (4.1,0) {$\quad\pretransl{s_n}\quad$};
    \node[link] (cocontr) at (3,1.1) {}; \node at (cocontr) {$\oc$};
    \node (v) at (3,2) {$v$};
    \node at (3,0.6) {$u_1 \ldots u_n$};
    \draw [<-] (v) -- (cocontr);
    \draw[rounded corners=5pt] (sub1) -- ++(0,0.5) -- (cocontr);
    \draw[rounded corners=5pt] (subn) -- ++(0,0.5) -- (cocontr);
  }{ [xshift=8.2cm,yshift=2.3cm]
    \node at (0.8,0.4) {$\pretransl{t S} =$};
    \node[subnet] (subt) at (2.2,0) {$\quad\pretransl{t}\quad$};
    \node[subnet] (subS) at (3.8,0) {$\quad\pretransl{S}\quad$};
    \node[link] (nimp) at (3,1) {}; \node at (nimp) {$\nimp$};
    \node (v) at (3,1.8) {$v$};
    \node at (3,0.6) {$u \quad w$};
    \draw (v) -- (nimp);
    \draw[<-, rounded corners=4pt] (subt) -- ++(0,0.5) -- (nimp);
    \draw[rounded corners=4pt] (subS) -- ++(0,0.5) -- (nimp);
  }{ [xshift=11.3cm,yshift=5.6cm]
    \node at (1,-0.9) {$\transl{t} =$};
    \node[subnet] (sub) at (3,0) {$\qquad\pretransl{t}\qquad$};
    \node[link] (contr1) at (2.5,-1.4) {}; \node at (contr1) {$\wn$};
    \node[link] (contrl) at (3.5,-1.4) {}; \node at (contrl) {$\wn$};
    \node (v1) at (2.5,-2.3) {$v_1$};
    \node (vl) at (3.5,-2.3) {$v_l$};
    \draw[->] (contr1) -- (v1);
    \draw[->] (contrl) -- (vl);
    \draw[rounded corners=15pt] (sub.west) -- ++(0,-0.5) -- (contr1);
    \draw[rounded corners=15pt] (sub.south) -- ++(0,-0.5) -- (contr1);
    \draw[rounded corners=15pt] (sub.south) -- ++(0,-0.5) -- (contrl);
    \draw[rounded corners=15pt] (sub.east) -- ++(0,-0.5) -- (contrl);
    \node at (2.5,-.9) {$\ldots$};
    \node at (3.5,-.9) {$\ldots$};
    \node at (3,-2.3) {$\ldots$};
    \node at (1.8,-.6) {$v_{1_1}$};
    \node at (2.7,-.6) {$v_{1_j}$};
    \node at (3.3,-.6) {$v_{l_1}$};
    \node at (4.3,-.6) {$v_{l_k}$};
  }{ [xshift=0.8cm,yshift=2.2cm]
    \node at (0,0) {$\pretransl{x} =\ \ v$};
  }
  \end{tikzpicture}
\end{figure}
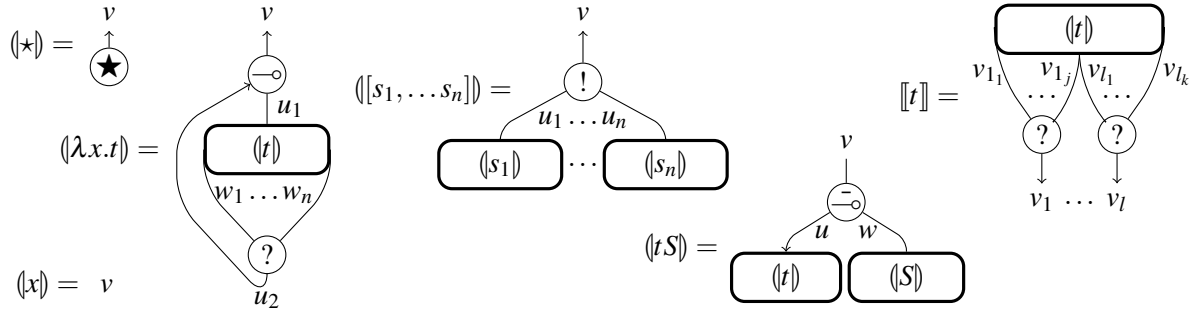

Note that a net translation is always defined for simple terms while it is 
not for general terms, because of possible incompatibility in the interfaces of 
translated addends.

\begin{definition}[Resource permutations]\label{def:RIN:permnet}
Given a simple pre-net $\graph$, a \textit{resource permutation} $\permgraph$ 
is a total function from the set of $\oc$-links in $\net$ to $\bigcup_n 
\permns$ such that
if a link $l$ has arity $m$,
  then $\permnet(l)$ is an element $\sigma_m$ of $\permms$.
We shall also write $\perm_l$ for $\permgraph(l)$ and denote the set of 
resource permutation of $\graph$ as $\permgraphs$.
\end{definition}

\begin{definition}[Resource net reduction]\label{def:nets:red}
The \textit{redex} of a cut $w$ in a simple pre-net is the pair of links having 
$w$ as conclusion.
The \textit{simple reduction} $\reduce$ is the graph-rewrit\-ing relation from 
simple pre-nets to pre-nets defined by the following elementary reduction steps,
also depicted in \autoref{fig:nets:red}, omitting contexts.
\begin{align}
\label{eq:RIN:red:impl}
  \graph,\ \pimplink{u}{v}{w},\ \nimplink{u'}{v'}{w}
  \ \ &\reduce\ \ 
  \graph [v \eqtd v',u \eqtd u']
\\
\label{eq:RIN:red:exp}
  \graph,\ \oclink{v_1,\ldots,v_n}{w},\ \wnlink{u_1,\ldots, u_m}{w}
  \ \ &\reduce\ \ 
  \begin{dcases}
    \sum_{\permn \in \permns}\ \graph_{\permn}
      [v_1 \eqtd u_{\permn(1)}, \ldots, v_n \eqtd u_{\permn(n)}]
      &\text{if } n=m\\
    0
      &\text{if } n \neq m
  \end{dcases}
\end{align}
Where $\graph_i [v \eqtd u]$ denotes the $i$-th copy of the pre-net $\graph$,
where the vertices $v,u$ have been equated.
In such a case, we say then there is a \textit{simple reduction step} $\rho: 
\graph \reduce \sumnet$, where $\sumnet$ is a sum of simple pre-nets and is 
also written as $\rho(\graph)$.
The \textit{reduction} is the extension of the simple reduction to formal sums 
of simple pre-nets: if
  $\graph \reduce \sumnet$,
then
  $\graph+\sumnet' \reduce \sumnet + \sumnet'$.
If
  $\graph \reduce^* \sumnet' \not \reduce$,
we write $\nf[\graph]=\sumnet'$.
\end{definition}

\begin{figure}
  \caption{Cut elimination rules: linear implication, and exponential.}
  \label{fig:nets:red}
  \centering
  \begin{tikzpicture}[
    auto, scale=1,
    link/.style = { circle, minimum size=5mm, draw },
    subnet/.style = { rectangle, rounded corners=5pt, very thick, draw}]
{ [xshift=-2cm,yshift=.7cm]
    \node[link] (lambda) at (0,0) {}; \node at (lambda) {$\pimp$};
    \node[link] (app) at (.7,1) {}; \node at (app) {$\nimp$};
    \node (vroot)  at (.7,1.8) {$v'$};
    \node (varg) at (1.4,.2) {$u'$};
    \node (vbody)  at (0,-.8) {$v$};
    \node (vvar)  at (-.4,-1) {$u$};
    \draw [rounded corners=5pt] (app) -- ++(-.7,-.3) -- node {$w$} (lambda);
    \draw [<-, rounded corners=5pt] (app) -- ++(.7,-.3) -- (varg);
    \draw [->] (vroot) -- (app);
    \draw [<-] (lambda) -- (vbody);
    \draw [<-, rounded corners=5pt]
      (lambda) -- ++(-.8,-.8) -- ++ (0,-.6) -- ++ (.4,0) -- (vvar);
    \node at (2.3,.4) {\Large $\reduce$};
    \node at (3.5,0.7) {$u \eqtd u'$};
    \node at (3,-.7) {$v \eqtd v'$};
}{ [xshift=5.5cm,yshift=0cm]
    \node[link] (ccontr) at (1,1.5) {}; \node at (ccontr) {$!$};
    \node[link] (contr)   at (1,.5) {}; \node at (contr) {$?$};
    \node (v1)  at (.3,2.2) {$v_1$};
    \node at (1,2.2) {$\ldots$};
    \node (vn) at (1.7,2.2) {$v_n$};
    \node (u1)  at (.3,-.2) {$u_1$};
    \node at (1,-.2) {$\ldots$};
    \node (um) at (1.7,-.2) {$u_m$};
    \draw (contr) -- node {$w$} (ccontr);
    \draw [rounded corners=5pt] (v1) -- ++(0,-.5) -- (ccontr);
    \draw [rounded corners=5pt] (vn) -- ++(0,-.5) -- (ccontr);
    \draw [rounded corners=5pt] (u1) -- ++(0,.5) -- (contr);
    \draw [rounded corners=5pt] (um) -- ++(0,.5) -- (contr);
    \node at (4,0.7) {
      {\Large $\xrightarrow{n=m}
        \displaystyle\sum_{\permn\in\permns}\;$}
      $v_i \eqtd v_{\permn(i)}$
    };
  \node at (-1,1) {
    {\Large $0 \; \xleftarrow{n\neq m}$} };}
  \end{tikzpicture}
\end{figure}
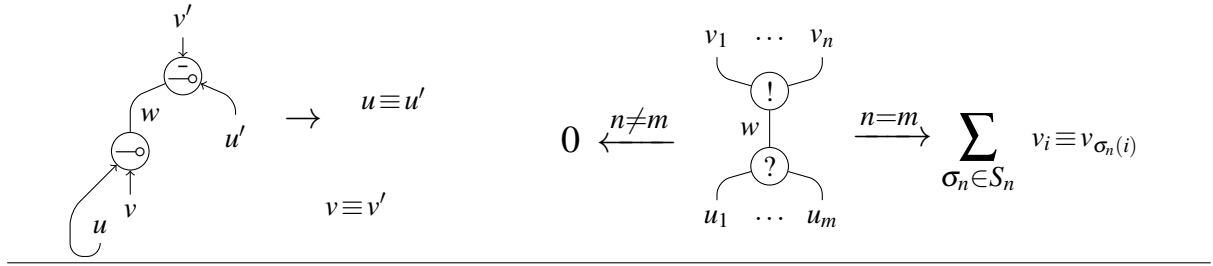

\begin{definition}[Resource interaction nets]\label{def:RIN}
Let $t\in \simterms$ and $\transl{t} \reduce^* \sumnet$, for a sum
$\sumnet = \net_1 + \ldots + \net_n$,
where each $\net_i$ is a pre-net.
Then $\net_i$ is called a \textit{simple resource interaction net}
and $\sumnet$ a \textit{resource interaction net}.
From now on we shall again avoid to repeat the ``resource interaction'' naming 
of nets.
\end{definition}

We recall that the net reduction can simulate the term reduction and strongly 
normalises.

\begin{example}\label{ex:nets}
Consider $\delta = \lambda x. x [x]$
and notice $\transl{\delta}$ is not a pre-net,
because a typing function on the structure of vertices and links does not exist.
Recall the terms $I$ and $t$ from \autoref{ex:terms}
and look at \autoref{fig:ex}.
On the left extremity:
  $\transl{I}$ is closed and $\transl{I}:\oc \const \pimp \const$.
On the middle left: $\net : \star$ is not a translation of a term,
but it is a net,
  because $\transl{t[x,y]} \reduce \net$ by eliminating a linear implication 
cut.
Also, $\net$ is not a closed net,
 because it has three conclusions: $v_1, z_1, z_2$.
On the right side: an exponential reduction step involving index permutation,
that rewrite $\net$ into a sum of two normal simple nets.
Observe the reduct is equal to
  $\transl{\ x[y[\star]]\ +\ y[x[\star]]\ }$.
Consider $\transl{\lambda f. f_1 [ f_2 [\const]][I,I] }$,
that is a closed net of type $\star$,
and observe the reduct $\met$ of the only linear implication cut
that is depicted in \autoref{fig:ex2}.
The normalisation requires: one exponential step (on the left),
two linear implication steps (on the right),
and finally four exponential steps (not showed) to reach the net
$
  \constlink{v_1 \eqtd v_8} +
  \constlink{v'_1 \eqtd v_8}
  =
 \transl{\const+\const}$.
\end{example}

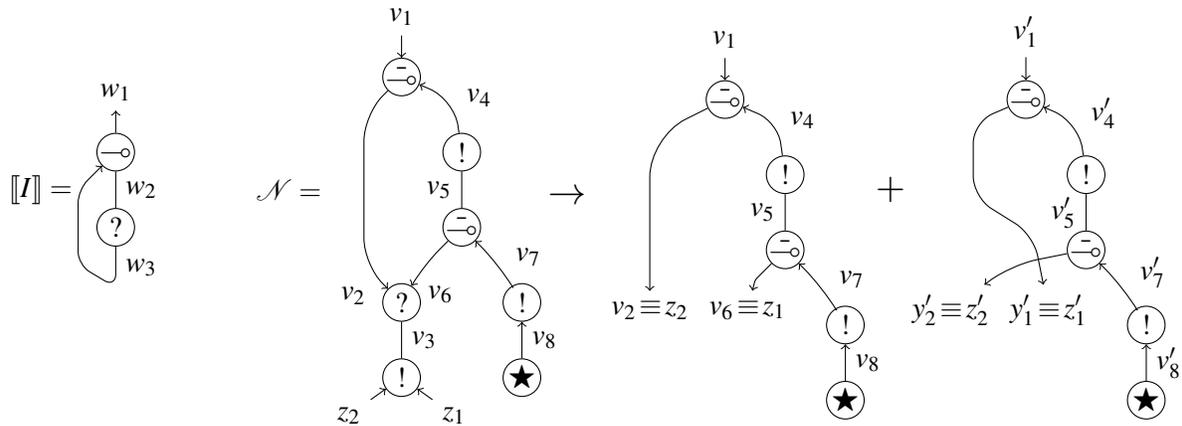
\begin{figure}[b]
  \centering
  \caption{Example: nets and reduction.}
  \label{fig:ex}
  \begin{tikzpicture}[
    auto, scale=1,
    link/.style = { circle, minimum size=5mm, draw },
    subnet/.style = { rectangle, rounded corners=5pt, very thick, draw} ]
  \node at (4.2,2.5) {$\net =$};
  { [xshift=-1.1cm,yshift=2cm]
    \node at (2,0.5) {$\transl{I} =$};
    \node[link] (lambda) at (3,1) {}; \node at (lambda) {$\pimp$};
    \node (v) at (3,1.8) {$w_1$};
    \node[link] (contr) at (3,0) {}; \node at (contr) {$\wn$};
    \draw[->] (lambda) -- (v);
    \draw (lambda) -- node {$w_2$} (contr);
    \draw[<-, rounded corners=5pt]
      (lambda) -- ++(-.5,-.5) -- ++(0,-1) --
      (3,-.8) -- node[swap] {$w_3$} (contr);
  }{ [xshift=4.7cm,yshift=1cm]
    \node (v1) at (1,3.8) {$v_1$};
    \node[link] (app1)    at (1,3) {}; \node at (app1) {$\nimp$};
    \node[link] (ccontr1) at (1.8,2) {}; \node at (ccontr1) {$\oc$};
    \node[link] (app2)    at (1.8,1) {}; \node at (app2) {$\nimp$};
    \node[link] (contr)   at (1,0) {}; \node at (contr) {$\wn$};
    \node[link] (ccontri) at (1,-1) {}; \node at (ccontri) {$\oc$};
    \node[link] (ccontr)  at (2.6,0) {}; \node at (ccontr) {$\oc$};
    \node[link] (star)    at (2.6,-1) {}; \node at (star) {$\bigstar$};
    \node (u1)  at (.3,-1.5) {$z_2$};
    \node (u1') at (1.7,-1.5) {$z_1$};
    \draw [<-] (app1) -- (v1);
    \draw[->, rounded corners=10pt]
      (app1) -- ++(-.5,-.5) -- ++(0,-2) -- node[swap] {$v_2$}  (contr);
    \draw[<-, rounded corners=10pt]
      (app1) -- ++(.7,-.3) -- node {$v_4$} (ccontr1);
    \draw (app2) -- node {$v_5$} (ccontr1);
    \draw[->, rounded corners=8pt]
      (app2) -- ++(-.5,-.5) -- node {$v_6$} (contr);
    \draw[<-, rounded corners=10pt]
      (app2) -- ++(.5,-.5) -- node {$v_7$} (ccontr);
    \draw (contr) -- node {$v_3$}(ccontri);
    \draw [<-] (ccontr) -- node {$v_8$}(star);
    \draw [<-](ccontri) -- (u1);
    \draw [<-](ccontri) -- (u1');
  }
  \node at (7.9,2.4) {\Large $\reduce$};
  { [xshift=11cm,yshift=0.7cm]
    { [xshift=-2cm,yshift=0cm]
    \node (v1) at (1,3.8) {$v_1$};
    \node[link] (app1)    at (1,3) {}; \node at (app1) {$\nimp$};
    \node[link] (ccontr1) at (1.8,2) {}; \node at (ccontr1) {$\oc$};
    \node[link] (app2)    at (1.8,1) {}; \node at (app2) {$\nimp$};
    \node[link] (ccontr)  at (2.6,0) {}; \node at (ccontr) {$\oc$};
    \node[link] (star)    at (2.6,-1) {}; \node at (star) {$\bigstar$};
    \node (u1)  at (0,0.2) {$v_2 \eqtd z_2$};
    \node (u1') at (1.3,0.2) {$v_6 \eqtd z_1$};
    \draw [<-] (app1) -- (v1);
    \draw[->, rounded corners=15pt] (app1) -- ++(-1,-.5) -- (u1);
    \draw[<-, rounded corners=8pt]
      (app1) -- ++(.7,-.3) -- node {$v_4$} (ccontr1);
    \draw (app2) -- node {$v_5$} (ccontr1);
    \draw [->, rounded corners=1pt] (app2) -- ++(-.4,-.4) -- (u1');
    \draw[<-, rounded corners=10pt] (app2) -- ++(.5,-.5) -- node {$v_7$} 
(ccontr);
    \draw [<-] (ccontr) -- node {$v_8$}(star);
    }
    \node at (1.2,1.8) {\Large$+$};
    { [xshift=2cm,yshift=0cm]
    \node (v1) at (1,3.9) {$v'_1$};
    \node[link] (app1)    at (1,3) {}; \node at (app1) {$\nimp$};
    \node[link] (ccontr1) at (1.8,2) {}; \node at (ccontr1) {$\oc$};
    \node[link] (app2)    at (1.8,1) {}; \node at (app2) {$\nimp$};
    \node[link] (ccontr)  at (2.6,0) {}; \node at (ccontr) {$\oc$};
    \node[link] (star)    at (2.6,-1) {}; \node at (star) {$\bigstar$};
    \node (u1')  at (0,0.2) {$y'_2 \eqtd z'_2$};
    \node (u1) at (1.3,0.2) {$y'_1 \eqtd z'_1$};
    \draw [<-] (app1) -- (v1);
    \draw[->, rounded corners=8pt]
      (app1) -- ++(-.7,-.3) -- ++(0,-1) -- ++(.8,-.5) -- (u1);
    \draw[<-, rounded corners=10pt]
      (app1) -- ++(.7,-.3) -- node {$v'_4$} (ccontr1);
    \draw (app2) -- node {$v'_5$} (ccontr1);
    \draw [->, rounded corners=10pt] (app2) -- ++(-1,-.2) -- (u1');
    \draw[<-, rounded corners=10pt] (app2) -- ++(.5,-.5) -- node {$v'_7$} 
(ccontr);
    \draw [<-] (ccontr) -- node {$v'_8$}(star);
    }
  }
  \end{tikzpicture}
\end{figure}

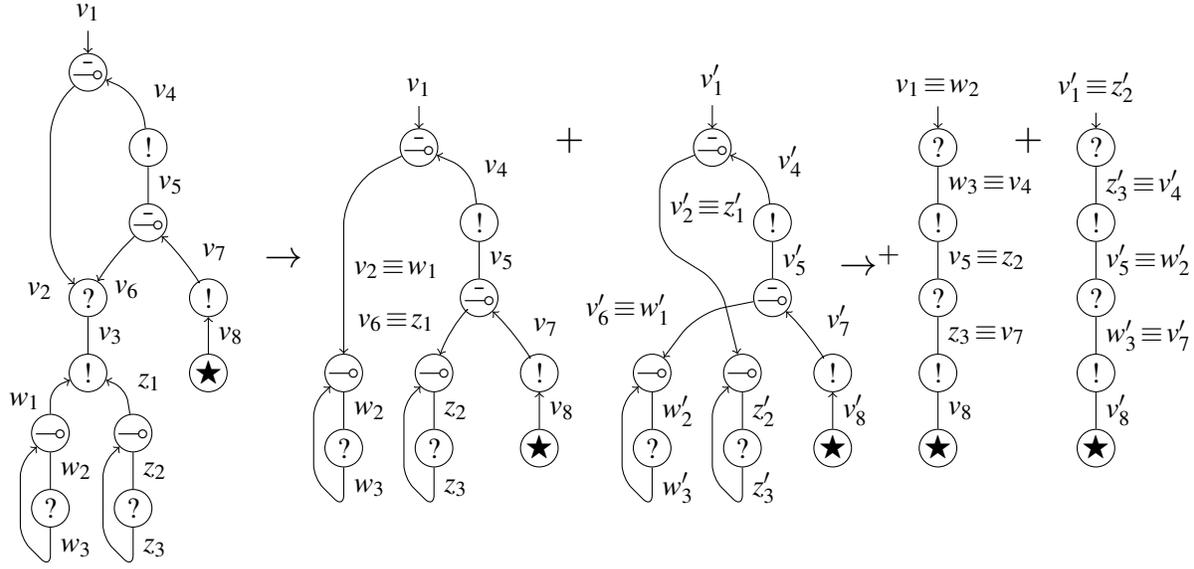
\begin{figure}
  \centering
  \caption{Example: nets reduction. Rightmost reduction is made of four steps, 
two on each addend.}
  \label{fig:ex2}
  \begin{tikzpicture}[
    auto, scale=1,
    link/.style = { circle, minimum size=5mm, draw },
    subnet/.style = { rectangle, rounded corners=5pt, very thick, draw} ]
  { [xshift=2.2cm,yshift=-1.8cm]
    \node[link] (lambda) at (3,1) {}; \node at (lambda) {$\pimp$};
    \node[link] (contr) at (3,0) {}; \node at (contr) {$\wn$};
    \draw (lambda) -- node {$w_2$} (contr);
    \draw[<-, rounded corners=5pt]
      (lambda) -- ++(-.4,-.4) -- ++(0,-1) --
      (3,-.8) -- node[swap] {$w_3$} (contr);
  }
  { [xshift=3.3cm,yshift=-1.8cm]
    \node[link] (lambda1) at (3,1) {}; \node at (lambda1) {$\pimp$};
    \node[link] (contr1) at (3,0) {}; \node at (contr1) {$\wn$};
    \draw (lambda1) -- node {$z_2$} (contr1);
    \draw[<-, rounded corners=5pt]
      (lambda1) -- ++(-.4,-.4) -- ++(0,-1) --
      (3,-.8) -- node[swap] {$z_3$} (contr1);
  }
  { [xshift=4.7cm,yshift=1cm]
    \node (v1) at (1,3.8) {$v_1$};
    \node[link] (app1)    at (1,3) {}; \node at (app1) {$\nimp$};
    \node[link] (ccontr1) at (1.8,2) {}; \node at (ccontr1) {$\oc$};
    \node[link] (app2)    at (1.8,1) {}; \node at (app2) {$\nimp$};
    \node[link] (contr)   at (1,0) {}; \node at (contr) {$\wn$};
    \node[link] (ccontri) at (1,-1) {}; \node at (ccontri) {$\oc$};
    \node[link] (ccontr)  at (2.6,0) {}; \node at (ccontr) {$\oc$};
    \node[link] (star)    at (2.6,-1) {}; \node at (star) {$\bigstar$};
    \draw [<-] (app1) -- (v1);
    \draw[->, rounded corners=10pt]
      (app1) -- ++(-.5,-.5) -- ++(0,-2) -- node[swap] {$v_2$}  (contr);
    \draw[<-, rounded corners=10pt]
      (app1) -- ++(.7,-.3) -- node {$v_4$} (ccontr1);
    \draw (app2) -- node[swap] {$v_5$} (ccontr1);
    \draw[->, rounded corners=8pt]
      (app2) -- ++(-.5,-.5) -- node {$v_6$} (contr);
    \draw[<-, rounded corners=10pt]
      (app2) -- ++(.5,-.5) -- node {$v_7$} (ccontr);
    \draw (contr) -- node {$v_3$}(ccontri);
    \draw [<-] (ccontr) -- node {$v_8$}(star);
    \draw [<-, rounded corners=5pt]
      (ccontri) -- ++(-.5,-.2) -- node[swap] {$w_1$} (lambda);
    \draw [<-, rounded corners=5pt]
      (ccontri) -- ++(.5,-.2) -- node[] {$z_1$}  (lambda1);
  }
  \node at (8.3,1.5) {\Large $\reduce$};
  { [xshift=11.1cm,yshift=0cm]
    { [xshift=-5cm,yshift=-1cm]
      \node[link] (lambda) at (3,1) {}; \node at (lambda) {$\pimp$};
      \node[link] (contr) at (3,0) {}; \node at (contr) {$\wn$};
      \draw (lambda) -- node {$w_2$} (contr);
      \draw[<-, rounded corners=5pt]
        (lambda) -- ++(-.4,-.4) -- ++(0,-1) --
        (3,-.8) -- node[swap] {$w_3$} (contr);
    }
    { [xshift=-3.8cm,yshift=-1cm]
      \node[link] (lambda1) at (3,1) {}; \node at (lambda1) {$\pimp$};
      \node[link] (contr1) at (3,0) {}; \node at (contr1) {$\wn$};
      \draw (lambda1) -- node {$z_2$} (contr1);
      \draw[<-, rounded corners=5pt]
        (lambda1) -- ++(-.4,-.4) -- ++(0,-1) --
        (3,-.8) -- node[swap] {$z_3$} (contr1);
    }
    { [xshift=-2cm,yshift=0cm]
      \node (v1) at (1,3.8) {$v_1$};
      \node[link] (app1)    at (1,3) {}; \node at (app1) {$\nimp$};
      \node[link] (ccontr1) at (1.8,2) {}; \node at (ccontr1) {$\oc$};
      \node[link] (app2)    at (1.8,1) {}; \node at (app2) {$\nimp$};
      \node[link] (ccontr)  at (2.6,0) {}; \node at (ccontr) {$\oc$};
      \node[link] (star)    at (2.6,-1) {}; \node at (star) {$\bigstar$};
      \draw [<-] (app1) -- (v1);
      \draw[->, rounded corners=15pt]
        (app1) -- ++(-1,-.5) -- node {$v_2 \eqtd w_1$} (lambda);
      \draw[<-, rounded corners=8pt]
        (app1) -- ++(.7,-.3) -- node {$v_4$} (ccontr1);
      \draw (app2) -- node[swap] {$v_5$} (ccontr1);
      \draw [->, rounded corners=10pt]
        (app2) -- ++(-.4,-.4) -- node[swap] {$v_6 \eqtd z_1$} (lambda1);
      \draw[<-, rounded corners=10pt]
        (app2) -- ++(.5,-.5) -- node {$v_7$} (ccontr);
      \draw [<-] (ccontr) -- node {$v_8$}(star);
    }
    \node at (1,3.1) {\Large$+$};
    { [xshift=-.9cm,yshift=-1cm]
      \node[link] (lambda) at (3,1) {}; \node at (lambda) {$\pimp$};
      \node[link] (contr) at (3,0) {}; \node at (contr) {$\wn$};
      \draw (lambda) -- node {$w'_2$} (contr);
      \draw[<-, rounded corners=5pt]
        (lambda) -- ++(-.4,-.4) -- ++(0,-1) --
        (3,-.8) -- node[swap] {$w'_3$} (contr);
    }
    { [xshift=0.3cm,yshift=-1cm]
      \node[link] (lambda1) at (3,1) {}; \node at (lambda1) {$\pimp$};
      \node[link] (contr1) at (3,0) {}; \node at (contr1) {$\wn$};
      \draw (lambda1) -- node {$z'_2$} (contr1);
      \draw[<-, rounded corners=5pt]
        (lambda1) -- ++(-.4,-.4) -- ++(0,-1) --
        (3,-.8) -- node[swap] {$z'_3$} (contr1);
    }
    { [xshift=1.9cm,yshift=0cm]
      \node (v1) at (1,3.9) {$v'_1$};
      \node[link] (app1)    at (1,3) {}; \node at (app1) {$\nimp$};
      \node[link] (ccontr1) at (1.8,2) {}; \node at (ccontr1) {$\oc$};
      \node[link] (app2)    at (1.8,1) {}; \node at (app2) {$\nimp$};
      \node[link] (ccontr)  at (2.6,0) {}; \node at (ccontr) {$\oc$};
      \node[link] (star)    at (2.6,-1) {}; \node at (star) {$\bigstar$};
      \draw [<-] (app1) -- (v1);
      \draw[->, rounded corners=8pt]
        (app1) -- ++(-.7,-.3) --  node {$v'_2 \eqtd z'_1$} ++(0,-1) --
        ++(.8,-.5) -- (lambda1);
      \draw[<-, rounded corners=10pt]
        (app1) -- ++(.7,-.3) -- node {$v'_4$} (ccontr1);
      \draw (app2) -- node[swap] {$v'_5$} (ccontr1);
      \draw [->, rounded corners=10pt]
        (app2) -- ++(-1,-.2) -- node[swap] {$v'_6 \eqtd w'_1$} (lambda);
      \draw[<-, rounded corners=10pt]
        (app2) -- ++(.5,-.5) -- node {$v'_7$} (ccontr);
      \draw [<-] (ccontr) -- node {$v'_8$}(star);
    }
  }
  \node at (16.1,1.5) {\Large $\rightarrow^+$};
  { [xshift=17cm,yshift=0cm]
    \node (v1) at (0,3.8) {$v_1 \eqtd w_2$};
    \node[link] (contr) at (0,3) {}; \node at (contr) {$\wn$};
    \node[link] (ccontr) at (0,2) {}; \node at (ccontr) {$\oc$};
    \node[link] (contr1) at (0,1) {}; \node at (contr1) {$\wn$};
    \node[link] (ccontr1) at (0,0) {}; \node at (ccontr1) {$\oc$};
    \node[link] (star)    at (0,-1) {}; \node at (star) {$\bigstar$};
    \draw [->] (v1) -- (contr);
    \draw (contr) -- node {$w_3 \eqtd v_4$} (ccontr);
    \draw (ccontr) -- node {$v_5 \eqtd z_2$} (contr1);
    \draw (contr1) -- node {$z_3 \eqtd v_7$} (ccontr1);
    \draw (ccontr1) -- node {$v_8$} (star);
  }
  \node at (18.2,3.1) {\Large $+$};
  { [xshift=19.1cm,yshift=0cm]
    \node (v1) at (0,3.8) {$v'_1 \eqtd z'_2$};
    \node[link] (contr) at (0,3) {}; \node at (contr) {$\wn$};
    \node[link] (ccontr) at (0,2) {}; \node at (ccontr) {$\oc$};
    \node[link] (contr1) at (0,1) {}; \node at (contr1) {$\wn$};
    \node[link] (ccontr1) at (0,0) {}; \node at (ccontr1) {$\oc$};
    \node[link] (star)    at (0,-1) {}; \node at (star) {$\bigstar$};
    \draw [->] (v1) -- (contr);
    \draw (contr) -- node {$z'_3 \eqtd v'_4$} (ccontr);
    \draw (ccontr) -- node {$v'_5 \eqtd w'_2$} (contr1);
    \draw (contr1) -- node {$w'_3 \eqtd v'_7$} (ccontr1);
    \draw (ccontr1) -- node {$v'_8$} (star);
  }
  \end{tikzpicture}
\end{figure}

\section{Paths}
\label{sec:paths}

\subsection{Definitions}
We introduce some basic definitions about the paths, where the most notable 
characterise the paths where the computation is visible (straightness) 
in its entirety (maximality and comprehensiveness).
This last notion is the only substantial difference with respect to the 
classic notion of path as formulated in \cite{DanosRegnier:1995}.
A superficial technical difference is the choice of using concatenation 
instead of composition as the basic relation on paths.

\begin{definition}[Path] \label{def:path}
  Given a simple net $\net$, two vertices $u, w \in \net$ are \textit{linked}, 
  or connected, if there is a link $l \in \net$ s.t.  $u,w \in l$.
  A \textit{path} $\pi = (v_1, \ldots, v_n)$ with $n>0$ in $\net$
  is a sequence of vertices s.t.
    for all $i < n$,
    the vertices $v_i, v_{i+1}$ are connected.
  We call $\pi$ \textit{trivial} if its lenght is $1$; 
  we say $\pi$ \textit{unitary} if is $2$, so that there is only one link
  crossed by $\pi$.
  \\
  Moreover, if $\pi$ crosses consecutively the same link $l$ more that once, 
  then $\pi$ is called \textit{bouncing}.
  If $l$ is not a $\const$-link and $\pi$ crosses $l$ through $v,v'$ such that
    $v,v' \in C(l)$ or $v,v' \in P(l)$,
  then $\pi$ is \textit{twisting}.
  When $\pi$ is both non-bouncing and non-twisting, $\pi$ is \textit{straight}.
  Moreover, $\pi$ is \textit{maximal} if there is no other path
  $\pi' \in \net$ s.t. $\pi \sqsubseteq \pi'$, where $\sqsubseteq$ is the 
  prefix order on sequences.
  Also, $\pi$ is \textit{comprehensive} when it crosses all the premises of all 
  the exponential links.
  Finally if $\pi$ is both straight and maximal, then $\pi$ is an 
  \textit{execution} path.
  In a net $\sumnet$, we denote with $\epaths{\sumnet}$
  (or with $\ecpaths{\sumnet}$) 
  the set of execution paths (respectively also comprehensive)
  in some simple $\net$ addend of $\sumnet$.
  \\
  Given two paths $\pi,\pi'$
  we denote the \textit{reversal} of $\pi$ as $\pi^-$,
  while the \textit{concatenation} of $\pi'$ to $\pi$ as $\pi \conc \pi'$.
\end{definition}

We can now concretely aim to define a proper notion of path persistence, that 
intuitively means ``having a residual'', so first we inspect and define the 
action of residual of path.
The case of linear implication is straightforward, because the rewriting is 
local and we only have to ensure that a path does not partially belong to 
a redex.
The case of exponential, instead, is rather more delicate, because the 
rewriting is global: a simple net rewrites to a sum of simple nets, hence a 
path may be duplicated in several addends or destroyed.
Which addends contain the residual(s) of a given crossing?
The net reduction consists of the sum of all the permutation of the indices of 
the $\oc$-links (cf. \autoref{def:nets:red}), thus each addend contains all 
and only the paths that respect the addend's own permutation, for any crossing 
of the redex.
If a path $\pi$ is persistent, then, there must be a permutation such that 
$\pi$ always crosses the redex respecting the correspondences fixed by the 
permutation.


\begin{definition}[Path residual]
\label{def:RIN:path-red}
Given a net $\net$ and a reduction $\rho$ on a redex $R \in \net$,
we say a path $\pi \in \net$ is \textit{long enough} for $R$ when neither its 
first nor its last vertex is the cut of $R$.
In such a case, we can express $\pi$ by isolating every \textit{crossing} of 
$R$, that is a maximal sub-sequence of $\pi$ entirely contained in $R$ as:
$\ 
  \pi =
  \pi_0 \conc \chi_1 \conc \pi_1 \conc \ldots \conc \chi_k \conc \pi_k\text{,} 
\ $
where for any $0\leq l \leq k$, the subpath $\chi_l$ is a crossing for $R$.
This last is called the \textit{redex crossing form} (RCF) of $\pi$ for $R$.
\\
The \textit{path reduction} is a function from paths in $\net$ to sums 
of paths in $\rho(\net)$.
The residual of $\pi$, written $\rho(\pi)$, is defined according to the 
reduction rule used by $\rho$ and by extension of the case of $\rho(\chi_l)$.
\begin{enumerate}
\item \label{def:RIN:path-red:imp}
  Linear implication cut.
  If $\chi_l$ is as in \autoref{eq:RIN:red:impl},
  then $\rho(\chi_l)$ is defined as follows.\\
  \begin{minipage}{0.45\textwidth}
    \begin{align}
      \label{eq:RIN:crossing-red:imp:ll}
      \rho((v,w,u)) &= (v)\\
    \label{eq:RIN:crossing-red:imp:rr}
    \rho((v',w,u') &= (v')
    \end{align}
  \end{minipage}
  \begin{minipage}{0.45\textwidth}
    \begin{align}
      \label{eq:RIN:crossing-red:imp:lr}
      \rho((v,w,u')) &= 0
      \\
      \label{eq:RIN:crossing-red:imp:rl}
      \rho((v',w,u)) &= 0
    \end{align}
  \end{minipage}\\\vspace{\lineskip}

  The residual of the whole $\pi$ is defined as:
  \begin{equation}
  \rho(\pi) =
  \begin{cases}
  \label{eq:RIN:path-red:imp}
  \pi_0 \conc \rho(\chi_1) \conc \pi_1 \conc \ldots
    \conc \rho(\chi_k) \conc \pi_k
    &\text{if for any $i$, } \rho(\chi_i)\neq0\\
  0
    &\text{otherwise}
  \end{cases}
  \end{equation}
\item \label{def:RIN:path-red:exp}
  Exponential cut.
  Let $\chi_l$ be as in \autoref{eq:RIN:red:exp} and $\permn \in \permns$.
  First, we define the residual of $\chi_l$ with respect to $\permn$,
  for every pair of indices $0 \leq i \leq n$, and $0 \leq j \leq m$:
  \begin{equation}
  \label{eq:RIN:crossing-red:exp}
  \rho^{\permn}(v_i,w,u_j) =
  \begin{cases}
  (v_j)
    &\text{if } n = m \text{, and }\permn(i) = j\\
  0
    &\text{if }n \neq m \text{, or }\permn(i) \neq j\\
  \end{cases}
  \end{equation}
  Now, similarly to \autoref{eq:RIN:path-red:imp}, we can define the residual 
  of the path $\pi$ with respect to $\permn$:
  \begin{equation}
    \label{eq:RIN:path-red-sigma:exp}
    \rho^{\permn}(\pi) =
    \begin{dcases}
      \pi_0 \conc
      \rho^{\permn}(\chi_1) \conc \pi_1 \conc \ldots \conc
      \rho^{\permn}(\chi_k) \conc \pi_k
    &\text{if for any } l \text{, }\rho^{\permn}(\chi_l)\neq0\\
    0
    &\text{otherwise}
    \end{dcases}
  \end{equation}
  Finally, we can define the residual of $\pi$ as the sum of all the residuals, 
  for any $\permn$:
  \begin{equation}
  \label{eq:RIN:path-red:exp}
    \rho(\pi) =
    \sum_{\permn \in \permns}
      \rho^{\permn}(\pi)
  \end{equation}
\end{enumerate}
If $\rho(\pi) \neq 0$, then $\pi$ is \textit{persistent to $\rho$}.
If, for every reduction sequence $\rho = (\rho_1, \ldots, \rho_m)$, and for 
every $1 \leq i \leq m$, the path $\pi$ is persistent to $\rho_i$, then $\pi$ 
is \textit{persistent}.
\end{definition}

\begin{example}\label{ex:path}
Recall the nets discussed in \autoref{ex:nets} and let
  $\rho(\net) = \net \reduce \net_l + \net_r$,
respectively be the left and the right addend of \autoref{fig:ex}.
Look at the net $\net $ and notice the paths $(v_1,v_4)$ and $(v_1,v_2,v_1)$ 
are not straight -- the former is twisting, while the latter is bouncing.
Consider the net $\transl{I}$ and the path $\phi = (w_1,w_2,w_3)$.
It is straight and also maximal.
Indeed, $\epaths{\transl{I}} = \lbrace \phi, \phi^- \rbrace$.
What about $\epaths{\net}$?
If we start from $v_1$ we find two paths seeking for the head variable:
  $\pi_1 = (v_1, v_2, v_3, z_1)$ and
  $\pi_2 = (v_1, v_2, v_3, z_2)$.
Both are straight and persistent, since:
  $\nf[\pi_1] = \pi_{1_r} = (v_1, v_2 \eqtd z_1)$ and
  $\nf[\pi_2] = \pi_{2_l} = (v_1, v_2 \eqtd z_2)$.
On the other hand they are not comprehensive, since they do not cross $v_4$ nor 
$v_7$.
Remark also $\pi_1,\pi_2$ cross the exponential redex differently,
and they do not belong to the same addend of the reduct,
for $\pi_{1_r} \in \net_r$, while $\pi_{2_l} \in \net_l$.
If otherwise we begin with $z_1$, which morally represents a free variable, 
the walk searches for the term that is going to substitute it.
$\pi_3 = (z_1, v_3, v_2, v_4, v_5, v_6, v_3, z_1)$
is not persistent, because crosses the same redex twice, each of those belongs 
to a distinct permutation.
Morally $\pi_3$ is trying to use the same variable as the function of both 
the applications.
\end{example}

\subsection{Results}

A persistent execution path travels through every vertex of a net that either 
belongs to the normal form of the net, or is eliminated by the normalisation.
Since RINs represent a linear calculus without erasing, we show that 
comprehensiveness of paths is a natural property for execution paths.
Moreover RINs have no duplication, despite what sum creation looks like,
so we can also show that the exponential reduction rule indeed partitions 
persistent execution paths among the addends it creates ---its action is a 
bijection.

\begin{lemma}
\label{lem:RIN:exponentials}
In a closed simple net $\net$,
the conclusion of an exponential link is either
the first premise of a linear implication link, or
a cut with another exponential link.
\end{lemma}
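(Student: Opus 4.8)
The plan is to run a case analysis driven entirely by the type grammar of \autoref{def:types} together with conditions (1)--(3) of \autoref{def:prenet}. Write $l$ for the given exponential link and $c$ for its conclusion. By the typing constraints the conclusion of a $\oc$- or $\wn$-link is always of the shape $\oc A$, i.e. a word of $\mathbb{E}$; since the two grammars are disjoint (a word of $\mathbb{E}$ has $\oc$ as its outermost symbol, whereas every word of $\mathbb{T}$ is either $\const$ or has $\pimp$ as its main connective), the type of $c$ is not a word of $\mathbb{T}$.

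The first genuine step is to show that $c$ carries a second link. In a closed net condition (2) makes the sole conclusion the unique $\outpt$ vertex, and it carries the net's type $T$; since $\net$ arises (by \autoref{def:RIN}) from the translation of a closed simple term, whose type is a word of $\mathbb{T}$, and since reduction preserves the interface, $T$ is non-exponential. Hence $c$, being $\mathbb{E}$-typed, is not the conclusion of $\net$, so by condition (1) it is linked by exactly two links: $l$ itself and some further link $l'$. This is the point at which ``closed'' is genuinely used: a bare co-weakening would be a closed pre-net whose conclusion is exponential, but it is not a net in the sense of \autoref{def:RIN}.

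It then remains to read off, across all five kinds, which vertices may carry an exponential type. Inspecting \autoref{def:types}: the $\const$-link offers only a $\const$-typed vertex; for $\pimplink{A}{B}{A\pimp B}$ and $\nimplink{A}{B}{A\pimp B}$ the conclusion $A\pimp B$ and the second premise $B$ are words of $\mathbb{T}$, so the only exponential vertex is the first premise, of type $A=\oc(\ldots)$; and for $\oclink{A,\ldots,A}{\oc A}$ and $\wnlink{A,\ldots,A}{\oc A}$ the premises are of type $A\in\mathbb{T}$, so the only exponential vertex is the conclusion. Therefore the $\mathbb{E}$-typed vertices of a net are exactly the first premises of implication links and the conclusions of exponential links. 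Applying this to the vertex $c$ seen inside $l'$: either $c$ is the first premise of the linear implication link $l'$, or $c$ is the conclusion of the exponential link $l'$ and hence --- being also the conclusion of $l$ --- a cut between two exponential links. This is precisely the claimed dichotomy; condition (3) merely pins down which sub-case occurs --- for instance a $\wn$-conclusion can be a cut only against a $\oc$-link. The only delicate point is the justification that the conclusion of $\net$ is non-exponential, which is what excludes the degenerate cases; the rest is a finite inspection of \autoref{def:types}.
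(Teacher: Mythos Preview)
Your argument is correct, and it is genuinely different from the paper's. The paper proceeds by induction on the length of the reduction sequence $\transl{t}\reduce^*\sumnet$: it checks the statement on raw translations (base case) and then verifies that each reduction step preserves it, by a case analysis on the rule and on how the step touches the exponential link. You instead give a static, purely type-theoretic argument: the conclusion of any exponential link carries an $\mathbb{E}$-type; the unique $\outpt$ conclusion of a closed net carries a $\mathbb{T}$-type (here you appeal to \autoref{def:RIN} and the fact that reduction leaves the interface untouched); hence $c$ is internal and linked by some $l'$; and a one-shot inspection of the typing table shows that the only $\mathbb{E}$-typed positions are first premises of $\pimp/\nimp$ and conclusions of $\oc/\wn$.

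What this buys: your route is shorter and makes transparent that the lemma is really a corollary of the type grammar rather than of the dynamics --- the induction on reduction in the paper is replaced by the single, easier invariant ``reduction preserves conclusions''. The paper's route, on the other hand, is closer in spirit to the later proofs (e.g.\ \autoref{lem:comprehensiveness}), where one does need to track structural properties along reduction/expansion, so it sets up machinery that gets reused. One small wrinkle in your write-up: you say $\net$ arises from the translation of a \emph{closed} simple term; \autoref{def:RIN} does not require $t$ to be closed, but this is harmless --- either observe that closedness of $\net$ forces closedness of $\transl{t}$ (since reduction does not touch conclusions), or simply note that the $\outpt$ conclusion of any $\transl{t}$ is $\mathbb{T}$-typed regardless, which is all you need.
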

\begin{proof}
Given \autoref{def:RIN} of nets, we proceed by induction on the length of 
the reduction sequence $\rho: \transl{t} \reduce^* \sumnet$, 
for some term $t$ and net $\sumnet$.
\begin{enumerate}
\item \textit{Base.} Suppose $|\rho| = 0$.
  Thesis holds for $\transl{\cdot}$, by immediate verification of 
  \autoref{def:RIN:transl}.
\item \textit{Step.} Suppose $|\rho| > 0$.
  Let $l = \link{v_1, \ldots, v_n}{\wn/\oc}{v}$
    be an exponential link in $\net$ 
  and suppose $\rho = \rho' \rho''$,
    for some reduction sequence $\rho''$
      where the thesis holds by inductive hypothesis (IH),
        for some reduction step $\rho'$ that is of our interest.
  \begin{enumerate}
  \item If $\rho'$ does not affect $l$, then IH is trivially preserved.
  \item If $\rho'$ is a linear implication step involving $l$, then, by IH, the
    redex has to contain $\pimplink{v}{u}{w}$ and $\nimplink{v'}{u'}{w}$
    (if $l$ is negative), or 
    by $\pimplink{v'}{u'}{w}$ and $\nimplink{v}{u}{w}$
    (if $l$ is positive).
    In both cases, observe now that, by definition of nets, and in particular 
    by \autoref{def:prenet} and \ref{def:types},
    $v'$ must be the conclusion of an exponential link $l'$ dual to $l$.
    Therefore in the reduct of $\rho'$, $v\eqtd v'$ will be the conclusion 
    of $l'$ (i.e. an exponential cut).
  \item If $\rho'$ is an exponential implication step involving $l$, then 
    notice that, by IH, $v$ has to be the reduced cut.
    In such a case $l$ is erased,
    either with, or without, the whole net $\net$, 
    (depending on whether the arity mismatches or not),
    hence preserving the thesis.
  \end{enumerate}
\end{enumerate}
\vspace{-1.5em}
\end{proof}


\begin{lemma}
\label{lem:comprehensiveness}
For any term $t$ such that $\transl{t}:\const$,
any persistent path $\pi\in\epaths{\transl{t}}$ is comprehensive.
\end{lemma}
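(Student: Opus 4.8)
The plan is to argue by induction on the length of a reduction $\rho \colon \transl{t} \reduce^* \nf[\transl{t}]$ to normal form, exploiting that the closed net $\transl{t}$, being of ground type $\const$, normalises to $\const + \cdots + \const$, each addend being a single $\const$-link whose conclusion is the root. First I would record the shape of an arbitrary execution path $\pi$ in a closed net: being straight, at every internal vertex (which carries exactly two links) $\pi$ is forced to leave by the link it did not enter by, so it can only terminate at a vertex carrying a single link, i.e.\ a conclusion; since the net is closed the only such vertex is the root, and the $\const$-link acts as a reflection point (it is exempt from the twisting condition and is crossed only once). Hence every $\pi \in \epaths{\transl{t}}$ is a palindrome running from the root down to a $\const$-link and back. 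In particular $\nf[\transl{t}]$ carries no exponential link at all, so comprehensiveness is vacuous there, which anchors the induction.

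Next I would reduce the whole statement to a single claim about one exponential cut. By \autoref{lem:RIN:exponentials} the conclusion of every exponential link of $\transl{t}$ is either the first premise of a linear-implication link or already an exponential cut; since the normal form contains no exponential link, along $\rho$ each exponential link is eventually one of the two links of a reduced exponential cut (a linear-implication step may first be needed to turn its conclusion into such a cut). The residual through a linear-implication step removes only the crossings of that linear redex and leaves every exponential-premise crossing untouched, so $\pi$ crosses a given exponential premise if and only if its current residual crosses the corresponding relabelled premise. It therefore suffices to prove the following local claim and then pull it back along $\rho$ to $\transl{t}$: whenever a persistent execution path $\pi'$ meets an exponential cut $w$ between $\oclink{v_1,\ldots,v_n}{w}$ and $\wnlink{u_1,\ldots,u_m}{w}$ that is about to be reduced, $\pi'$ crosses all the premises $v_1,\ldots,v_n$ and $u_1,\ldots,u_m$. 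For the remaining exponential links the inductive hypothesis applies directly, since reduction never creates exponential links and neither step kind alters their premises nor whether $\pi'$ crosses them.

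For the local claim I would fix a permutation $\permn$ witnessing persistence, i.e.\ with $\rho^{\permn}(\pi') \neq 0$; by \autoref{def:RIN:path-red} this forces $n = m$ and forces every crossing $(v_i,w,u_j)$ of $\pi'$ to satisfy $\permn(i)=j$, so that the crossings of $\pi'$ at $w$ are mutually consistent (ruling out the ``same redex, distinct permutations'' behaviour that killed $\pi_3$ in \autoref{ex:path}). It then remains to show that $\pi'$ crosses $w$ exactly $n$ times, hitting each $v_i$ and each $u_j$ once. Here I would use linearity and the absence of erasing: each premise $v_i$ is an internal $\inpt$ vertex sitting on a thread that carries no weakening able to discard it and must be consumed entirely into the unique $\const$; since $\pi'$ is a maximal straight palindrome reaching that $\const$, it cannot avoid such a thread and must cross $v_i$, and symmetrically each $u_j$. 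Consistency with $\permn$ then matches these crossings pairwise, yielding comprehensiveness at $w$ and, pulled back, at the corresponding exponential link of $\transl{t}$.

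The hard part will be this last step: turning the informal principle ``no resource is erased, so every premise lies on the unique route to the $\const$'' into a rigorous argument that a maximal straight path cannot skip a premise. The subtlety is that straightness constrains how the path may turn, so one cannot simply invoke connectedness; I expect to need an auxiliary sweeping argument showing that the palindrome, forced to return to the root, must re-enter and exhaust each exponential link's premises, using the typing discipline of \autoref{def:types} (every exponential premise has the same type $A$ and feeds a sub-thread ending in a $\const$) together with the arity matching guaranteed by persistence. The linear-implication steps and the bookkeeping of relabelled premises are routine by comparison.
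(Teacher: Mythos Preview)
Your inductive scheme---working back from the normal form and pulling comprehensiveness through each reduction step---is the same as the paper's, and your reduction to a ``local claim'' about the exponential cut currently being eliminated is the right decomposition. The gap is that your inductive hypothesis is too weak to discharge that local claim. Knowing only that the residual $\rho^{\permn}(\pi')$ is \emph{comprehensive} tells you nothing about the identified vertices $v_i \eqtd u_{\permn(i)}$ in the reduct: once the cut is eliminated these vertices are no longer premises of any exponential link, so comprehensiveness of the residual does not force it to visit them, and hence you cannot conclude that $\pi'$ crossed every $v_i$ and $u_j$. Your attempt to fill this with ``linearity and absence of erasing'' is not a proof, and the assumption that a premise ``carries no weakening able to discard it'' is false in general: (co-)weakenings do occur in $\transl{t}$ (e.g.\ from $\lambda x.s$ with $x \notin \freevars[s]$, or from the empty bag $1$) and can be eliminated without killing a persistent path.

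The fix, which is exactly what the paper does, is to strengthen the inductive hypothesis to: \emph{a vertex $v$ is missed by $\pi$ if and only if $v$ is the conclusion of a (co-)weakening}. The ``if'' direction is immediate from straightness and maximality. With this stronger hypothesis the exponential step becomes a one-liner: by \autoref{lem:RIN:exponentials} the identified vertices $v_i \eqtd u_{\permn(i)}$ in the reduct cannot be (co-)weakening conclusions, so by the strengthened hypothesis the residual visits all of them, hence $\pi'$ crosses every $v_i$ and every $u_j$. The linear-implication step then needs one small extra observation (if the identified argument vertex is missed in the reduct, persistence forces \emph{both} sides of the expanded cut to be (co-)weakenings, else an arity-mismatch would annihilate the net), but this is routine. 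In short, your ``hard part'' disappears once the induction is loaded correctly; the informal sweeping argument you anticipated is neither needed nor, as stated, sound.
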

\begin{proof}
We shall prove a stronger thesis:
given a persistent path $\pi\in\epaths{\transl{t}}$,
a vertex $v \notin \pi$ if and only if
there exists a (co-)weakening $l$ such that $v \in C(l)$.
\begin{itemize}
\item
  The ``if'' direction of the thesis follows from a mere observation of the 
  \autoref{def:path} of execution paths.
  If $\pi$ includes a conclusion of a (co-)weakening,
  then $\pi$ is necessarily bouncing or non-maximal,
  in both cases contradicting the hypothesis that $\pi$ is an execution path.

\item 
  In order to prove the ``only if'' part of the thesis,
  let us first recall that, by \autoref{def:RIN},
  a simple net is either a translation of a term, or an addend in its reduct.
  We now go by induction on a sequence $\rho$ of expansion (or an 
  anti-sequence of reduction) from a normal form $\sumnet$
  back to a simple net $\net = \transl{t}$, for some term $t$.
  If $\sumnet = 0$ there is nothing to prove,
  so we shall assume it to be non-zero.

  \begin{enumerate}
  \item \textit{Base.}
    Suppose $|\rho|=0$.
    Then $\sumnet=\net=\transl{t}$.
    Therefore $t = \const$, because the only closed term whose translation is 
    normal with respect to net reduction is $\const$.
    Then $\net = \constlink{v}$, and $\epaths{\net} = \lbrace (v,v) \rbrace$.
  \item \textit{Step.}
    Suppose $|\rho| > 0$.
    Let $\rho: \net \reduce \sumnet'$ and $\sumnet' \reduce^* \sumnet$.
    We then distinguish two sub-cases depending on the rule employed by $\rho$.
    \begin{enumerate}
    \item \textit{Linear implication cut.}
      Suppose $\net'$ to be an addend of $\sumnet'$ containing the 
      vertices $v,u$, and the expansion step to be the following, where the 
      $v_1,v_2,u_1,u_2,w$ are introduced.
      \[\net'
        \quad\leftarrow\quad
        \net',\ \pimplink{u_1}{v_1}{w},\ \nimplink{u_2}{v_2}{w}\]
      Now $v$ cannot be the conclusion of a (co-)weakening,
      as established by \autoref{lem:RIN:exponentials}.
      Hence, by inductive hypothesis (IH), $v \in \pi$ and it is enough to 
      observe, by \autoref{def:RIN:path-red}, \autoref{def:RIN:path-red:imp},
      that also $v_1,v_2 \in \rho^{-1}(\pi)$.
      Let's now discuss $u$.
      \begin{enumerate}
      \item If $u \notin \pi$, then,
        because of \autoref{lem:RIN:exponentials}, 
        $u$ must be an exponential redex containing a weakening.
        So, first we clearly have $u_1 \notin \rho^{-1}(\pi)$.
        Moreover, we also have that $u_2 \notin \rho^{-1}(\pi)$ because 
        $u_2$ must be the conclusion of a co-weakening.
        Otherwise $u$ would be a net-neutralisation redex,
        which contradicts the persistence hypothesis we have for $\pi$.
        In such a case, the expansion is admissible, and we can verify both
        $u_1,u_2 \notin \rho^{-1}(\pi)$.
      \item Otherwise $u \in \pi$.
        Then again by inspection of
        \autoref{def:RIN:path-red} and \ref{def:RIN:path-red:imp},
        we verify that $u_1,u_2 \in \rho^{-1}(\pi)$.
      \end{enumerate}
    \item \textit{Exponential cut.}
      If the expansion affect $0$ addends,
      $\pi$ is unaffected, hence IH is trivially preserved.
      Otherwise, let $\rho$ be as follows.
      \begin{gather*}
        \sum_{\permn \in \permns}\ \net_C
        [v_1 \eqtd u_{\permn(1)}, \ldots, v_n \eqtd u_{\permn(n)}]\\
        \uparrow\\
        \net',\ \oclink{v_1,\ldots,v_n}{w},\ \wnlink{u_1,\ldots, u_n}{w}
      \end{gather*}
      For any $1\leq i,j \leq n$,
      the vertices $v_i,u_j$ cannot be the conclusion 
      of some (co-)weakening (cf. \autoref{lem:RIN:exponentials}).
      Thus, by IH, for any $1\leq i',j' \leq n$ such that
      $\rho(v_{i'}), \rho(u_{j'}) \in \pi$, we also have
      $v_{i'},u_{j'} \in \rho^{-1}(\pi)$.
    \end{enumerate}
  \end{enumerate}
\end{itemize}
\vspace{-1.5em}
\end{proof}

\begin{theorem}
\label{thm:RIN:path-red:bij}
For any closed $\transl{t}:\const$, every reduction step $\rho$ induces a 
bijection between persistent paths in $\epaths{\net}$ and persistent paths 
$\epaths{\rho(\net)}$.
\end{theorem}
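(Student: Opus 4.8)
The plan is to take the path-residual $\rho(-)$ of \autoref{def:RIN:path-red} itself as the candidate bijection and to produce its inverse by expansion, i.e.\ anti-reduction. The entire argument rests on \autoref{lem:comprehensiveness}: every persistent path in $\epaths{\net}$ is comprehensive, and it is precisely this rigidity that tames the a priori global effect of the exponential rule. I would first record a harmless technicality: since execution paths run between conclusions of $\net$ (by maximality and the structure exhibited in \autoref{lem:comprehensiveness}), they never stop at the internal cut $w$ of the redex $R$ of $\rho$, so each is long enough for $R$ and has a defined residual.

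Next I would verify that $\rho(-)$ sends a persistent execution path $\pi\in\epaths{\net}$ to a single persistent execution path of $\epaths{\rho(\net)}$. Non-vanishing is exactly persistence of $\pi$. That $\rho(\pi)$ is a single path rather than a proper sum is the delicate point and can fail only at an exponential step, where $\rho(\pi)=\sum_{\permn\in\permns}\rho^{\permn}(\pi)$. Here straightness already guarantees that every crossing of $R$ has the form $(v_i,w,u_j)$, running from a premise of the $\oc$-link through $w$ to a premise of the dual $\wn$-link, a crossing of shape $(v_i,w,v_{i'})$ or $(u_j,w,u_{j'})$ being bouncing. Each crossing imposes the constraint $\permn(i)=j$ via \autoref{eq:RIN:crossing-red:exp}, and persistence forbids these constraints from being mutually contradictory, for otherwise every $\rho^{\permn}(\pi)$ would vanish. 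By comprehensiveness the crossings cover all premises $v_1,\dots,v_n$ and all $u_1,\dots,u_n$, so the pairing $i\mapsto j$ they define is total on both sides: it is a single permutation $\hat\perm\in\permns$. Consequently $\rho^{\permn}(\pi)=0$ for every $\permn$ but $\hat\perm$, and $\rho(\pi)=\rho^{\hat\perm}(\pi)$ is a single path living in exactly one simple addend of $\rho(\net)$. Straightness and maximality of this residual then follow by inspection of \autoref{def:RIN:path-red}: each redex crossing is contracted to the merged vertex ($v\eqtd v'$ or $v_i\eqtd u_{\hat\perm(i)}$) while the interleaving subpaths $\pi_0,\dots,\pi_k$ are transcribed unchanged, so no bouncing or twisting is introduced and the two surviving endpoints keep the path maximal. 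Persistence of $\rho(\pi)$ is inherited: any reduction sequence issued from $\rho(\net)$, prefixed by $\rho$, is one issued from $\net$, along which $\pi$ persists.

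Then I would build the inverse. Anti-reduction $\rho^{-1}$ reinstates the cut $w$ together with its $\oc$/$\wn$ (or $\pimp$/$\nimp$) links and splits each merged vertex, turning a crossing vertex of a path $\pi'\in\epaths{\rho(\net)}$ back into a genuine redex crossing. This is exactly the expansion already analysed in the ``only if'' part of \autoref{lem:comprehensiveness}, which shows that expanding a comprehensive persistent path yields again a comprehensive persistent execution path and identifies precisely which vertices enter it. Since $\rho$ contracts and $\rho^{-1}$ splits the very same vertices, while both leave the non-redex portion of the path untouched, the two operations are mutually inverse on persistent execution paths: $\rho^{-1}(\rho(\pi))=\pi$ gives injectivity, and $\rho(\rho^{-1}(\pi'))=\pi'$ gives surjectivity.

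The main obstacle is the exponential step, and within it the claim that straightness, persistence, and comprehensiveness together force a single permutation $\hat\perm$; everything downstream---that the forward map lands in one addend, that it produces a single execution path, and that expansion inverts it---depends on this rigidity. The linear-implication case is by comparison purely local: $\rho(\pi)$ is a single path directly by \autoref{eq:RIN:path-red:imp}, and the expansion supplied by \autoref{lem:comprehensiveness} furnishes the inverse with no permutation bookkeeping at all.
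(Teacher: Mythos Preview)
Your proposal is correct and follows essentially the same route as the paper: take the path-residual of \autoref{def:RIN:path-red} as the forward map, and in the exponential case use \autoref{lem:comprehensiveness} to pin down a unique permutation $\hat\perm$, whence $\rho(\pi)$ collapses to the single addend $\rho^{\hat\perm}(\pi)$. The paper's own argument is terser---it records the four-way crossing correspondence for the implication cut and invokes comprehensiveness for uniqueness of $\permn$ in the exponential cut, then simply declares a one-to-one relation---whereas you additionally spell out why the residual is again a persistent execution path and make the inverse explicit via the expansion analysis already carried out inside the proof of \autoref{lem:comprehensiveness}; this extra care is welcome but does not change the strategy.
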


\begin{proof}
Let $\pi \in \epaths{\net}$ be persistent, and suppose its RCF is
$\pi_0 \conc \chi_1 \conc \pi_1 \conc \ldots \conc \chi_k \conc \pi_k$.
There are two reduction rules possibly used by $\rho$.
\begin{enumerate}
\item
  \textit{Linear implication cut.}
  Because of the persistence of $\pi$ to $\rho$, and by the
  definition given by \autoref{eq:RIN:path-red:imp}, we have $\chi_l \neq 0$, 
  for all $0 \leq l \leq k$, and
    $\rho(\pi) =
    \pi_0 \conc \rho(\chi_1) \conc \pi_1 \conc \ldots
    \conc \rho(\chi_k) \conc \pi_k$.
  By an inspection of the definition of crossing path reduction, we first 
  notice that, for each of the four possible extrema of $\chi_l$, there is a 
  unique reduct that corresponds to a particular direction
  (from $\inpt$ to $\outpt$, or \textit{vice versa})
  of one of the two cut vertices in $\net'$: 
  $v$, and $v'$.
  Namely, let $\rho$ be as in \autoref{eq:RIN:red:impl}.
  Then, the bijection is given as follows:
  \begin{enumerate}
  \item
    $\chi_l = (v,w,u)$ if and only if
    $\rho(\chi_l) = v$ from $\inpt$ to $\outpt$;
  \item
    $\chi_l = (v',w,u')$ if and only if
    $\rho(\chi_l) = v'$ from $\outpt$ to $\inpt$;
  \item
    $\chi_l = (u,w,v)$ if and only if
    $\rho(\chi_l) = v$ from $\outpt$ to $\inpt$;
  \item
    $\chi_l = (u',w,v')$ if and only if
    $\rho(\chi_l) = v'$ from $\inpt$ to $\outpt$.
  \end{enumerate}
  Such a bijection holds between $\chi_l$ and $\rho(\chi_l)$, so we also
  have a bijection between $\pi$ and $\rho(\pi)$.
\item
  \textit{Exponential cut.}
  Suppose the redex $R$ being as in \autoref{eq:RIN:red:exp}.
  Because of the persistence of $\pi$ to $\rho$, and by the definition given by 
  \autoref{eq:RIN:path-red-sigma:exp} and \ref{eq:RIN:path-red:exp}, it must be 
  the case that $n=m$ and that there exist a permutation $\permn \in \permns$ 
  such that for all $0 \leq l \leq k$, we have $\rho^{\permn}(\chi_l) \neq 0$.
  Moreover, by \autoref{lem:comprehensiveness}, $\pi$ is 
  comprehensive, therefore 
  $\permn$ is unique and for any other $\permn'$, we have 
  $\rho^{\permn'}(\pi)=0$.
  Let $\chi_l$ be as in \autoref{eq:RIN:crossing-red:exp}, and observe it 
  must be also the case that $\permn(i)=j$ so that
  $\rho^{\permn}(v_i,w,u_j) = (v_{\permn(i)} \eqtd u_j)$.
  We then obtained a one-to-one relation between $\pi$ and $\rho(\pi)$.
\end{enumerate}
\vspace{-1.5em}
\end{proof}

\section{Execution} \label{sec:Ex}

\subsection{Definitions}
We are ready to formulate the GoI construction for RINs.
We followed the spirit of the formulation for the case of MELL as formulated in 
\cite{DanosRegnier:1995}, but we characterise our resource exponentials,
which have no promotion, as a sort of superposition of $n$-ary multiplicatives.
We define a weight assignment for paths, so that the execution of a net is
the sum of the weights of any execution path within it, and we formulate a 
Dynamic Algebra $\rlstar$ on weights representing the computation.
A crossing of an exponential link is weighted spanning over the space of 
permutations of the indices of link's premises, and exponential weights 
interacts exactly as multiplicatives weights, i.e. by nullification or 
neutralisation.

\begin{definition}[Dynamic Algebra]\label{def:rlstar}
The $\rlstar$ algebra is defined over symbols in
$\lbrace 0, 1, p, q, e_n, \wunit \rbrace$, where $n$ is a natural number.
A word of its alphabet, called \textit{weight}, is generated by an unary 
\textit{inversion} operator $(\cdot)^*$ and a binary \textit{concatenation} 
operator with infix implicit notation.
The concatenation operator is a monoid, whose identity element is $1$, and 
whose absorbing element is $0$ (cf. \autoref{fig:rlstar:monoid}).
Moreover, the inversion operator is idempotent and involutive for concatenation
(cf. \autoref{fig:rlstar:conc-inv}), and satisfies the neutralisation and 
two annihilation equations in \autoref{fig:rlstar:neutr-ann}.
\end{definition}

\begin{figure}
  \vspace{0.3em}
  \begin{subfigure}{0.28\textwidth}
    \caption{Monoid rules.}
    \label{fig:rlstar:monoid}
    \vspace{-1.5em}
    \begin{align}
    \label{eq:lstar:assoc}
      a(bc) =& (ab)c\\
    \label{eq:lstar:one}
      a1 = 1a &= a\\
    \label{eq:lstar:zero}
      a0 = 0a &= 0
    \end{align}
  \end{subfigure}
  \hfill
  \begin{subfigure}{0.23\textwidth}
    \caption{Inversion rules.}
    \label{fig:rlstar:conc-inv}
    \vspace{-1.5em}
    \begin{align}
    \label{eq:lstar:idemp}
      (a^*)^* &= a\\
    \label{eq:lstar:invol}
      (a b)^* &= b^* a^*\\
    \notag
    \end{align}
  \end{subfigure} 
  \hfill
  \begin{subfigure}{0.28\textwidth}
    \caption{Computation rules.}
    \label{fig:rlstar:neutr-ann}
    \vspace{-1.5em}
    \begin{align}
    \label{eq:lstar:neutr}
      a a^* &= 1\\
    \label{eq:lstar:annihil:impl}
      q p^* = p q^* &= 0\\
    \label{eq:rlstar:annihil:exp}
      e_i e_{j \neq i}^* &= 0
    \end{align}
  \end{subfigure}
  \caption{The $\rlstar$ algebra.}
\end{figure}

\begin{definition}[Weighting]\label{def:RIN:weight}
  The \textit{permuted base weighting} is a map $w$
  from a unitary straight path $\pi \in \net$
    and a resource permutation $\permnet$
  to a weight of $\rlstar$, written as $w^{\permnet}(\pi)$.
  \begin{equation}
  \label{eq:RIN:weight:unitary}
  w^{\permnet}((u,v)) = 
  \begin{dcases}
    \const 
      & \text{if there is } \constlink{u} \text{ and } u=v 
      \\
    p 
      & \text{if there is }
      \pimplink{u}{w}{v} \text{ or } 
      \nimplink{u}{w}{v}
      \\
    q 
      & \text{if there is }
      \pimplink{w}{u}{v} \text{ or } 
      \nimplink{w}{u}{v}
      \\
    e_i 
      & \text{if there is }
      \wnlink{u_1,\ldots,u_i,\ldots, u_n}{v}
      \text{ and } u_i = u
      \\
    e_{\perm_r(i)} 
      & \text{if there is }
      r = \oclink{u_1,\ldots,u_i,\ldots, u_n}{v}
      \text{ and } u_i = u
      \\
    (w^{\permnet}(v,u))^*
      & \text{otherwise.}
  \end{dcases}
  \end{equation}
  Straightness of the unitary path $\pi$ implies that $\pi$ goes either:
  (i) from a premise vertex to a conclusion one, i.e. crossing a link in one of 
  the five possible ways that are covered by the first five clauses;
  (ii) vice versa, from a conclusion vertex to a premise one, covered by the 
  last clause.\\
  The \textit{permuted weighting} is the lifting of the permuted base weighting 
  to generic straight paths, and the \textit{path weighting} is the sum of all 
  the permuted weights of a path, for any resource permutation:
  \begin{align}
  \label{eq:RIN:weight}
    \permweight{\permnet}{v} = 1
    &&
    \permweight{\permnet}{(u,v)\conc\pi} =
      w^{\permnet}(u,v) \permweight{\permnet}{\pi}
    &&
    \weight{\pi} = \sum_{\permnet\in\permnets} \permweight{\permnet}{\pi}
    \text{.}
  \end{align}
\end{definition}

\begin{definition}[Execution]
\label{def:RIN:path:compr}
A path $\pi$ is \textit{regular} if $\weight{\pi}\neq0$.
The \textit{execution} of a net $\sumnet$, is defined as:
\begin{equation}
  \exec{\net} = \sum_{\pi \in \ecpaths{\net}} \weight{\pi}\text{.}
\end{equation}
\end{definition}

\begin{example}\label{ex:exec}
Consider the again the closed net $\met$, whose reduction has been discusssed 
in previous \autoref{ex:nets}, and that is depicted in the leftmost extremity 
of 
\autoref{fig:ex2}.
To have an idea of the execution of $\met$ and of the behaviour of the 
algebra, let us consider an execution comprehensive path, one of the 
persistent two, and compute its weight.
Given that the path is palindromic, i.e. has the form $\pi \conc \pi^-$, we 
will consider only its first half, that goes from the root of the term to the 
constant.
Moreover we will break lines when a path invert its polarity direction, i.e. if 
it walks from $\inpt$ to $\outpt$ or viceversa.
\begin{align*}
  (&
    v_1,v_2,v_3,w_1,w_2,w_3,
  &\sum_{\sigma_2 \in S_2}\ &
    q\ e_1\ e^*_{\sigma(2)}\ q^*\ e_1 \cdot
  \\[-2ex]
  &
    w_1,v_3,v_2,
  &&
    p\ e_{\sigma(2)}\ e^*_1\cdot
  \\
  &
    v_4,v_5,v_6,v_3,z_1,z_2,z_3,
  &&
    p^*\ e^*_1\ q\ e_2\ e^*_{\sigma(1)}\ q^*\ e_1 \cdot
  \\
  &
    z_1,v_3,v_6,
  &&
    p\ e_{\sigma(1)}\ e^*_2\ \cdot
  \\
  &
    v_7,v_8)
  &&
    p^*\ e^*_1\ \star
  \\
\intertext{On the path: reduce it using the exponential rule.
  On the weight:
  apply \autoref{eq:rlstar:annihil:exp} and then \ref{eq:lstar:zero}
  on the addend s.t. $\sigma_2 = (2,1)$,
  apply \autoref{eq:lstar:neutr} and then \ref{eq:lstar:one}
  on the one s.t. $\sigma_2 = (1,2)$.}
  \reduce\ (&
    v_1,v_2 \eqtd w_1,w_2,w_3,
  &=_{\rlstar}\ &
    q\ q^*\ e_1 \cdot
  \\
  &
    w_1 \eqtd v_2,
  &&
    p\ \cdot
  \\
  &
    v_4,v_5,v_6 \eqtd z_1,z_2,z_3,
  &&
    p^*\ e^*_1\ q\ q^*\ e_1 \cdot
  \\
  &
    z_1 \eqtd v_6,
  &&
    p\ \cdot
  \\
  &
    v_7,v_8)\;+
  &&
    p^*\ e^*_1\ \star \;+
 \\
  &
    0
  &&
    0
\intertext{Forget zeros on both side.
  On the path: reduce it using the leftmost linear implication rule.
  On the weight: apply \autoref{eq:lstar:neutr} and then \ref{eq:lstar:one}.}
  \reduce\ (&
    v_1 \eqtd w_2,w_3 \eqtd v_4, v_5,v_6 \eqtd z_1,z_2,z_3,
  &=_{\rlstar}\ &
    e_1 e^*_1\ q\ q^*\ e_1 \cdot
  \\
  &
    z_1 \eqtd v_6,
  &&
    p\ \cdot
  \\
  &
    v_7,v_8)
  &&
    p^*\ e^*_1\ \star
 \\
\intertext{On the path: reduce it using the linear implication rule.
  On the weight: apply \autoref{eq:lstar:neutr} and then \ref{eq:lstar:one}.}
  \reduce\ (&
    v_1 \eqtd w_2,w_3 \eqtd v_4,v_5 \eqtd z_2,z_3 \eqtd v_7,v_8)
  &=_{\rlstar}\ &
    e_1\ e^*_1\ e_1\ e^*_1\ \star
\\
\intertext{On the path: reduce it twice using exponential rules.
  On the weight: apply \autoref{eq:lstar:neutr} and then \ref{eq:lstar:one}, 
  and repeat.}
  \reduce\ (&
    v_1 \eqtd w_2 \eqtd v_5 \eqtd z_2,z_3 \eqtd v_7,v_8)
  &=_{\rlstar}\ &
    e_1\ e^*_1\ \star
\\
  \reduce\ (&
    v_1 \eqtd w_2 \eqtd v_5 \eqtd z_2 \eqtd v_8)
  &=_{\rlstar}\ &
    \star
\end{align*}
Therefore the persistent path turns out to be regular.
Even more, along the reduction we managed to apply, for each step, some
$\rlstar$ equations so that weight of every reduct is equal to the manipulated 
weight.
The next two theorems shall generalise these two facts.
\end{example}

\subsection{Results}

The $\rlstar$ algebra introduced so far accurately computes path reduction.
We prove the equivalence between regularity and persistence, and
show execution is invariant by reduction.
Not only the GoI is a suitable semantic for ground typed RINs, but also possess 
quantitative-awareness, since, for any term, the cardinality of execution paths 
that are regular is equal to those of addends of its normal form.

\begin{lemma}\label{lem:RIN:weight-inv}
For any closed simple net $\net:\const$,
any reduction step $\rho$, and
any path $\pi \in \ecpaths{\net}$:
\begin{itemize}
  \item $\rho(\pi) \neq 0$, \  and \ 
    $\weight{\pi} =_\rlstar \weight{\rho(\pi)}$; or
  \item $\rho(\pi) = 0$, \ and \ 
    $\weight{\pi} =_\rlstar 0$.
\end{itemize}
\end{lemma}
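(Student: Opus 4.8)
The plan is to exploit the multiplicativity of the weighting along the redex crossing form (RCF). Writing the RCF of $\pi$ for the redex of $\rho$ as $\pi_0\conc\chi_1\conc\ldots\conc\chi_k\conc\pi_k$, the recursion defining $\permweight{\permnet}{\cdot}$ in \autoref{eq:RIN:weight} peels off one edge at a time, so each permuted weight factors as the product of the weights of the subpaths $\pi_l$ (which lie wholly outside the redex) and of the crossings $\chi_l$; no shared boundary vertex is counted twice. Since $\weight{\pi}=\sum_{\permnet\in\permnets}\permweight{\permnet}{\pi}$, it suffices to control, resource permutation by resource permutation, how each crossing factor simplifies in $\rlstar$ and to match this with \autoref{def:RIN:path-red}. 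I would then split on the rule used by $\rho$.

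For a linear implication cut as in \autoref{eq:RIN:red:impl}, straightness forces each crossing $\chi_l$ to be a two-edge subpath joining a premise of the $\pimp$-link to a premise of the $\nimp$-link through the cut $w$. By \autoref{eq:RIN:weight} its weight is $pp^*$ or $qq^*$ for the two \emph{compatible} crossings (those pairing the premises that the rule equates), and $qp^*$ or $pq^*$ for the two \emph{incompatible} ones. The neutralisation equation \autoref{eq:lstar:neutr} collapses the former to $1$, exactly mirroring the single-vertex residual (which has weight $1$); the annihilation equation \autoref{eq:lstar:annihil:impl} collapses the latter to $0$, mirroring the residual $0$. Because this rule leaves the set of $\oc$-links untouched, $\permnets$ is unchanged by $\rho$, so the equalities pass through the sum termwise, yielding either $\weight{\pi}=_\rlstar\weight{\rho(\pi)}$ (all crossings compatible) or $\weight{\pi}=_\rlstar 0=\weight{\rho(\pi)}$.

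For an exponential cut as in \autoref{eq:RIN:red:exp} with $n=m$, a crossing $\chi_l=(v_i,w,u_j)$ weighs $e_{\perm_r(i)}\,e_j^{*}$, where $\perm_r=\permnet(r)$ is the value of the resource permutation at the reduced $\oc$-link $r$. By \autoref{eq:rlstar:annihil:exp} and \autoref{eq:lstar:neutr} this factor is $1$ when $\perm_r(i)=j$ and $0$ otherwise, which are precisely the two branches of \autoref{eq:RIN:crossing-red:exp} read with $\permn:=\perm_r$. The decisive leverage here is \autoref{lem:comprehensiveness}: since $\pi$ crosses every premise of both exponential links, the family of pairs $(i,j)$ realised by its crossings determines a single permutation $\tau\in\permns$, so $\permweight{\permnet}{\pi}=_\rlstar 0$ unless $\perm_r=\tau$, in which case every crossing factor is $1$ and $\permweight{\permnet}{\pi}$ reduces to the product of the surviving subpath weights. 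Splitting each $\permnet$ into its value at $r$ and its restriction $\permnet'$ to the remaining $\oc$-links (which are exactly the resource permutations of the addend $\graph_\tau$), the sum over $\permnet$ collapses to the sum over $\permnet'$ of the weight of $\rho^{\tau}(\pi)$, i.e. to $\weight{\rho(\pi)}$, recalling that $\rho(\pi)=\sum_{\permn}\rho^{\permn}(\pi)$ and that only $\tau$ contributes.

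The remaining subcase, an exponential cut with $n\neq m$, is where I expect the main obstacle to lie, and where local information is insufficient: the residual is $0$ by fiat in \autoref{eq:RIN:crossing-red:exp}, yet a single crossing weight $e_{\perm_r(i)}e_j^{*}$ can still equal $1$ (whenever $\perm_r(i)=j$ with $j\le n$), so the vanishing of $\weight{\pi}$ cannot be read off one crossing. Global comprehensiveness resolves this. As $\perm_r$ ranges over $\permns$, its image is $\{1,\dots,n\}$: if $m>n$ then $\pi$ must cross some $u_j$ with $j>n$, and the crossing through it weighs $e_{\perm_r(i)}e_j^{*}$ with $\perm_r(i)\le n<j$, hence $0$ by \autoref{eq:rlstar:annihil:exp}; if $m<n$ then $\pi$ crosses every $v_i$, so some crossing has $\perm_r(i)>m\ge j$ and the same annihilation applies. (Note that a comprehensive straight path can and must revisit premises here—e.g. crossing the lone $\oc$-premise twice when $n=1,m=2$—so comprehensiveness is compatible with the mismatch.) In either direction every summand $\permweight{\permnet}{\pi}$ vanishes, giving $\weight{\pi}=_\rlstar 0=\weight{\rho(\pi)}$ and closing the dichotomy.
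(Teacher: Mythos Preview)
Your overall approach coincides with the paper's: factor the permuted weight along the RCF, match the four linear-implication crossings to $pp^*,qq^*$ (neutralising) versus $pq^*,qp^*$ (annihilating), and in the exponential case use comprehensiveness to control which resource permutations survive. Two corrections are needed.

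First, a minor one: you appeal to \autoref{lem:comprehensiveness} to obtain comprehensiveness, but that lemma applies only to \emph{persistent} paths in a translation $\transl{t}$, and here $\pi$ is an arbitrary member of $\ecpaths{\net}$ with no persistence assumed. Comprehensiveness is already part of the hypothesis (the ``$+$'' in $E^+$), and the paper invokes it as such.

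Second, and this is a genuine gap, in the exponential case with $n=m$ you assert that comprehensiveness forces the family of crossing pairs $(i,j)$ to \emph{determine a single permutation} $\tau\in\permns$. That is false: a comprehensive straight path may traverse the redex as $(v_1,w,u_1)$ and later as $(u_2,w,v_1)$, so the pairs need not even define a function of $i$ --- you yourself remark, in the $n\neq m$ paragraph, that revisiting premises is compatible with comprehensiveness and straightness. When the pairs are inconsistent in this way, for every $\perm_r$ some crossing factor $e_{\perm_r(i)}e_j^{*}$ has $\perm_r(i)\neq j$ and hence equals $0$ by \autoref{eq:rlstar:annihil:exp}, so every $\permweight{\permnet}{\pi}$ vanishes and $\weight{\pi}=_\rlstar 0$; simultaneously $\rho^{\permn}(\pi)=0$ for every $\permn$ by \autoref{eq:RIN:crossing-red:exp}, whence $\rho(\pi)=0$. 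This is exactly the ``permutation incoherence'' subcase that the paper isolates explicitly (alongside arity mismatch) inside the $\rho(\pi)=0$ branch, and which your argument currently omits. Once you add this subcase, your $n=m$ discussion covers both outcomes of the dichotomy and the proof goes through.
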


\begin{proof}
Let $\pi \in \epaths{\net}$ and recall it has to be long enough for $\rho$, for 
it is maximal.
Suppose the RCF of $\pi$ w.r.t. the redex $R$ of $\rho$ is
  $\pi_0 \conc \chi_1 \conc \pi_1 \conc \ldots \conc \chi_k \conc \pi_k$.
We proceed by a case analysis of the reduction rule used by $\rho$.
\begin{enumerate}
\item \textit{Linear implication cut elimination.}
  Let $R$ be as in \autoref{eq:RIN:red:impl}.
  We distinguish two sub-cases, depending on the nullity of $\rho(\pi)$.

  \begin{enumerate}
  \item
    Suppose $\rho(\pi)=0$.
    Then by \autoref{def:RIN:path-red:imp} of \autoref{def:RIN:path-red},
    in particular \autoref{eq:RIN:path-red:imp},
    there must exist $0 \leq l \leq k$, such that
    $\rho(\chi_l) = 0$.
    Hence it must be the case that $\chi_l$ is either
    as in \autoref{eq:RIN:crossing-red:imp:lr}, or
    as in \autoref{eq:RIN:crossing-red:imp:rl}.

    \begin{enumerate}
     \item
      Suppose $\chi_l = (v,w,u')$.
      Then $\rho(\chi_l) = 0$,
      and $\weight{\chi_l} = p q^* =_\rlstar 0$.
    \item 
      Suppose $\chi_l = (v',w,u)$.
      Then $\rho(\chi_l) = 0$,
      and $\weight{\chi_l} = q p^* =_\rlstar 0$.
    \end{enumerate}
    By definition of weighting (Eq. \ref{eq:RIN:weight}) and using 
    the \autoref{eq:lstar:zero}, we conclude
    $\weight{\pi} =_\rlstar 0$.
    
  \item
    Suppose $\rho(\pi) \neq 0$.
    Then, again by definition given in \autoref{eq:RIN:path-red:imp}
    for any $0 \leq l \leq k$, $\rho(\chi_l) \neq 0$.
    Hence it must be the case that $\chi_l$ is either
    as in \autoref{eq:RIN:crossing-red:imp:ll}, or
    as in \ref{eq:RIN:crossing-red:imp:rr}.

    \begin{enumerate}
    \item
      Suppose $\chi_l = (v,w,u)$.
      Then $\rho(\chi_l) = (v' \eqtd u')$,
      and $\weight{\chi_l} = p p^* =_\rlstar 1$.
    \item
      Suppose $\chi_l = (v',w,u')$.
      Then $\rho(\chi_l) = (v' \eqtd u')$,
      and $\weight{\chi_l} = q q^* =_\rlstar 1$.
    \end{enumerate}
    Now, applying this fact on the definition given
    by \autoref{eq:RIN:path-red:imp}, and using \autoref{eq:lstar:one}
    we conclude.
    \begin{align*}
    \weight{\pi}
    &= \sum_{\permnet\in\permnets}
      \permweight{\permnet}{\pi_0}\ 
      \permweight{\permnet}{\chi_1}\ 
      \permweight{\permnet}{\pi_1}\ 
      \ldots
      \permweight{\permnet}{\chi_k}\ 
      \permweight{\permnet}{\pi_k}\ 
    \\
    &= \sum_{\permnet\in\permnets}
      \permweight{\permnet}{\pi_0}\ 
      \permweight{\permnet}{\pi_1}\ 
      \ldots
      \permweight{\permnet}{\pi_k}\ 
    \\
    &= \weight{\rho(\pi)}
    \end{align*}
  \end{enumerate}
 
\item \textit{Exponential cut elimination.}
  Let $R$ be as in \autoref{eq:RIN:red:exp}, and let $r$ be 
  the $\oc$-link involved in.
  We distinguish two sub-cases, depending on the nullity of $\rho(\pi)$.
  \begin{enumerate}
  \item Suppose $\rho(\pi) = 0$.
    Then by \autoref{def:RIN:path-red:exp} of \autoref{def:RIN:path-red},
    in particular \autoref{eq:RIN:path-red:exp},
    there are only two possible causes.
    \begin{enumerate}
    \item Arity mismatch,
      i.e. when $n \neq m$, where $n,m$ are the arities of the two links.
      Because of the hypothesis of comprehensiveness of $\pi$,
      it must be the case that $k \geq max(n,m)$.
      Then, whatever permutation $\permn \in \permns$ we choose for the 
      premises of the $\oc$-link in $R$,
      there always exists a crossing $\chi_l$, for some $0 \leq l \leq k$,
      such that $\chi_l = (u_{\permn(i)},w,v_j)$ and $\permn(i) \neq j$.
    \item Permutation incoherence,
      i.e. when $n = m$, but for any $\permn \in \permns$
      there exists a crossing $\chi_l \subseteq \pi$ such that
      $\permn(i) \neq j$.
      This morally happens when $\pi$ tries to use more than once a 
      resource, travelling from the same premise of the $\wn$-link to two 
      different premises of the corresponding $\oc$-link.
    \end{enumerate}
    Thus, in both cases there is a ``wrong'' crossing $\chi_l \subset \pi$ such 
    that, for any $\permnet$, we have
    $\permweight{\permnet}{\chi_l} = e_{{\permnet}_r(i)} e_j^*$ where
    $\permn(i) \neq j$.
    Hence, by \autoref{eq:rlstar:annihil:exp},
    $\permweight{\permnet}{\chi_l} = 0$.
    By definition of weighting (Eq. \ref{eq:RIN:weight}) and using 
    the \autoref{eq:lstar:zero}, we conclude
    $\weight{\pi} =_\rlstar 0$.

  \item Suppose $\rho(\pi) \neq 0$.
    Again by definition of path reduction, it must be the case that $n=m$,
    and that there exist a $\permn'' \in \permns$ such that
    for all $0 \leq l \leq k$ we have $\rho^{\permn'}(\chi_l) \neq 0$.
    In particular, let $\chi_l$ be as in \autoref{eq:RIN:crossing-red:exp},
    and observe it must be also the case that $\permn'(i)=j$,
    which allows $\rho^{\permn'}(v_i,w,u_j) = (v_{\permn(i)} \eqtd u_j)$.
    Moreover, by the comprehensiveness hypothesis for $\pi$,
    $\permn$ has to be unique, so that for any other $\permn''$, 
    we have $\rho^{\permn''}(\pi) = 0$.
    So, accordingly to this, let us split resource permutations $\permnets$
    into $\permnets' \cup \permnets''$,
    where the former is the set of those such that for any
    $\permnet' \in \permnets'$, $\permnet'(r) = \permn'$,
    while, symmetrically, the latter contains those such that for any 
    $\permnet'' \in \permnets''$, $\permnet'(r) \neq \permn'$.
    Hence, by \autoref{def:RIN:weight} we obtain
    \begin{align*}
     \weight{\pi}
      &= \sum_{\permnet'\in\permnets'} \permweight{\permnet'}{\pi}\ +\ 
        \sum_{\permnet''\in\permnets''} \permweight{\permnet''}{\pi}
      \\
      &=
        \sum_{\permnet'\in\permnets'}
          \permweight{\permnet'}{\pi_0}\ 
          \permweight{\permnet'}{\chi_1}\ 
          \permweight{\permnet'}{\pi_1}\ 
          \ldots\ 
          \permweight{\permnet'}{\chi_k}\ 
          \permweight{\permnet'}{\pi_k}\ 
        \ +\\
        &\phantom{=\ }
        \sum_{\permnet''\in\permnets''}
          \permweight{\permnet''}{\pi_0}\ 
          \permweight{\permnet''}{\chi_1}\ 
          \permweight{\permnet''}{\pi_1}\ 
          \ldots\ 
          \permweight{\permnet''}{\chi_k}\ 
          \permweight{\permnet''}{\pi_k}
      \text{.}
    \intertext{
      In the leftmost series, by \autoref{eq:lstar:neutr},
      $\permweight{\permnet'}{\chi_l} = e_{\permnet'(r)(i)} e_j^* = 1$.
      While in the rightmost, by \autoref{eq:rlstar:annihil:exp},
      $\permweight{\permnet''}{\chi_l} = e_{\permnet''(r)(i)} e_j^* = 0$,
      so it neutralises to $0$.
      Therefore we concluded:
    }
      &=_\rlstar
        \sum_{\permnet'\in\permnets'}
          \permweight{\permnet'}{\pi_0}\ 
          \permweight{\permnet'}{\pi_1}\ 
          \ldots\ 
          \permweight{\permnet'}{\pi_k}
      \\
      &= \weight{\rho(\pi)}
      \text{.}
    \end{align*}
  \end{enumerate}
\end{enumerate}
\vspace{-1.5em}
\end{proof}

\begin{lemma}\label{lem:RIN:weight-inv:seq}
For any closed simple net $\net:\const$,
any reduction sequence $\rho$, and
any path $\pi \in \ecpaths{\net}$:
\begin{itemize}
  \item $\rho(\pi) \neq 0$, \  and \ 
    $\weight{\pi} =_\rlstar \weight{\rho(\pi)}$; or
  \item $\rho(\pi) = 0$, \ and \ 
    $\weight{\pi} =_\rlstar 0$.
\end{itemize}
\end{lemma}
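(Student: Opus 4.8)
The plan is to bootstrap from the single-step result \autoref{lem:RIN:weight-inv} by induction on the length $m$ of the reduction sequence $\rho=(\rho_1,\ldots,\rho_m)$. The only structural facts I need beyond that lemma are that path reduction and weighting are $\nat$-linear---so that residuals compose along $\rho$, $\weight{\cdot}$ distributes over sums, and $\rho'(0)=0$ for any sequence $\rho'$---together with the fact that each $\rho_i$ preserves closedness and the ground type $\const$.

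For the base case $m=0$ the sequence is the identity, so $\rho(\pi)=\pi\neq0$ and $\weight{\pi}=_\rlstar\weight{\pi}$ hold reflexively, giving the first alternative.

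For the inductive step I would peel off the first step, writing $\rho_1:\net\reduce\sumnet_1$ with remainder $\rho'=(\rho_2,\ldots,\rho_m)$, and apply \autoref{lem:RIN:weight-inv} to $\rho_1$ and $\pi$. If $\rho_1(\pi)=0$, then $\weight{\pi}=_\rlstar0$, while $\rho(\pi)=\rho'(\rho_1(\pi))=\rho'(0)=0$, so the second alternative holds. Otherwise $\rho_1(\pi)=\pi'\neq0$ and $\weight{\pi}=_\rlstar\weight{\pi'}$; here the nonzero residual of a comprehensive execution path is a single path, because the linear-implication rule sends each crossing to one subpath and, by comprehensiveness, the exponential rule leaves a unique surviving permutation. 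I would then invoke the inductive hypothesis on the shorter sequence $\rho'$ and the path $\pi'$, and transport its dichotomy back through $\weight{\pi}=_\rlstar\weight{\pi'}$ and $\rho(\pi)=\rho'(\pi')$: if $\rho'(\pi')\neq0$ with $\weight{\pi'}=_\rlstar\weight{\rho'(\pi')}$ then $\weight{\pi}=_\rlstar\weight{\rho(\pi)}$, and if $\rho'(\pi')=0$ with $\weight{\pi'}=_\rlstar0$ then $\weight{\pi}=_\rlstar0$ with $\rho(\pi)=0$.

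The main obstacle is licensing that inductive call: the hypothesis demands that $\pi'$ again be an execution comprehensive path in a closed simple net of type $\const$. Closedness and ground type descend to the addend $\net'$ of $\sumnet_1$ that carries $\pi'$, and \autoref{thm:RIN:path-red:bij} supplies that the nonzero residual of an execution path is an execution path. The delicate ingredient is comprehensiveness: \autoref{lem:comprehensiveness} yields it for persistent execution paths in $\const$-typed nets, so stability of $\ecpaths{\cdot}$ under nonzero residuals reduces to observing that a nonzero residual of a persistent path is again persistent. Marshalling this interplay between persistence, straightness, maximality and comprehensiveness across a single step is where the care lies; the genuinely algebraic work---discharging the neutralisation \autoref{eq:lstar:neutr}, the annihilations \autoref{eq:lstar:annihil:impl} and \autoref{eq:rlstar:annihil:exp}, and the unit and zero laws \autoref{eq:lstar:one} and \autoref{eq:lstar:zero}---is already contained in \autoref{lem:RIN:weight-inv} and need not be repeated.
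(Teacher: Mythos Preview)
Your approach is essentially the paper's: induction on the length of $\rho$, peeling off the first step and invoking \autoref{lem:RIN:weight-inv}, then chaining via the inductive hypothesis. The paper's proof is terser still and does not even explicitly check that the residual $\pi'$ again lies in $\ecpaths{\cdot}$.

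There is one wrinkle in your licensing argument worth fixing. You route the claim that $\pi'\in\ecpaths{\net'}$ through persistence, citing \autoref{thm:RIN:path-red:bij} and \autoref{lem:comprehensiveness}; but both results are stated for \emph{persistent} execution paths, whereas the present lemma must cover an arbitrary $\pi\in\ecpaths{\net}$, persistent or not. The clean fix is to bypass persistence entirely: a direct inspection of \autoref{def:RIN:path-red} shows that a nonzero single-step residual of a straight, maximal, comprehensive path is again straight (crossings collapse to single vertices joining distinct context links), maximal (the endpoints of $\pi$ are unchanged since $\pi$ is long enough), and comprehensive (the exponential links of the reduct are those of $\net$ minus the redex, and $\pi'$ still visits all of their premises). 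With that in place, the inductive call is licensed without appeal to the cited results.
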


\begin{proof}
A straightforward induction on the length $n$ of the sequence $\rho$.
\begin{enumerate}
\item \textit{Base.}
  Suppose $n=0$.
  Trivially, $\rho(\pi) = \pi$, so $\weight{\rho(\pi)} = \weight{\pi}$.
\item \textit{Step.}
  Suppose $n>0$.
  Let $\rho = \rho' \rho''$, with 
    $\rho'$ a single step,
    $\rho''$ a sequence.
  \begin{enumerate}
  \item If both $\rho'(\pi) \neq 0$ and $\rho''(\rho'(\pi)) \neq 0$, then, by 
    \autoref{def:RIN:path-red} of path reduction, 
    $\rho''(\rho'(\pi)) = \rho(\pi)$.
    In this case, by previous \autoref{lem:RIN:weight-inv},
    $\weight{\rho'(\pi)} =_\rlstar \weight{\rho''(\rho'(\pi))}$.
    But, by inductive hypothesis we have that
     $ \weight{\pi} =_\rlstar \weight{\rho'(\pi)}$
    so we conclude.
  \item Otherwise $\rho'(\pi)$ or $\rho''(\rho'(\pi))$ are zero.
    \begin{enumerate}
    \item If $\rho'(\pi) = 0$, then by definition of path 
      reduction, $\rho(\pi) = \rho''(\rho'(\pi)) = 0$.
      But by inductive hypothesis, $\weight{\pi} =_\rlstar 0$, that is our 
      thesis.
    \item Otherwise $\rho'(\pi) \neq 0$ while $\rho''(\rho'(\pi)) = 0$.
      Then $\rho(\pi) = 0$ and, again by previous \autoref{lem:RIN:weight-inv}, 
      $\weight{\rho'(\pi)} =_\rlstar 0$.
      Hence, the thesis.
    \end{enumerate}
\end{enumerate}
\end{enumerate}
\end{proof}

\begin{theorem}
\label{thm:RIN:regular}
For any closed net $\sumnet: \const$, a path $\pi \in \ecpaths\sumnet$ is 
persistent if and only if $\pi$ is regular.
\end{theorem}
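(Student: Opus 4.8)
The plan is to derive the theorem almost immediately from the weight-invariance result for reduction sequences, \autoref{lem:RIN:weight-inv:seq}, so that the real mathematical content sits in that lemma and the present proof is a short packaging argument. First I would reduce to a single simple addend: a path $\pi \in \ecpaths{\sumnet}$ lives in exactly one simple addend $\net$ of $\sumnet$, and since the simple addends have pairwise disjoint vertices and never interact under reduction, both persistence and regularity of $\pi$ depend only on $\net$. Hence I may assume $\sumnet = \net$ is a closed simple net of type $\const$, which is precisely the hypothesis under which \autoref{lem:RIN:weight-inv:seq} applies.

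For the implication ``regular $\Rightarrow$ persistent'' I would argue by contraposition. Unfolding \autoref{def:RIN:path-red}, non-persistence of $\pi$ means there is a reduction sequence $\rho$ for which the residual vanishes; since a residual once equal to $0$ stays $0$, this is the same as $\rho(\pi) = 0$ for the whole sequence. Then the second clause of \autoref{lem:RIN:weight-inv:seq} gives $\weight{\pi} =_\rlstar 0$, i.e. $\pi$ is not regular.

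For the converse ``persistent $\Rightarrow$ regular'' I would take $\rho$ to be a full reduction sequence carrying $\net$ to its normal form $\nf[\net]$. Persistence yields $\rho(\pi) \neq 0$, so the first clause of \autoref{lem:RIN:weight-inv:seq} gives $\weight{\pi} =_\rlstar \weight{\rho(\pi)}$, and it remains to see that this residual has nonzero weight. Here I use that $\net$ is closed of ground type: each simple addend of $\nf[\net]$ is a single constant link $\constlink{v}$, whose unique execution path is the trivial $(v,v)$, weighted by $\const = \wunit$ according to the first clause of \autoref{eq:RIN:weight:unitary}. Thus $\weight{\rho(\pi)}$ is a nonempty sum of occurrences of $\wunit$, hence $\neq 0$, so $\weight{\pi} \neq 0$ and $\pi$ is regular.

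The only genuinely delicate point, and the step I expect to require the most care, is this last one: justifying that a nonzero residual in the normal form really carries nonzero weight rather than collapsing through the $\rlstar$ equations. I would make it rest on two explicit facts — that every normal-form addend of a closed, ground-typed net is exactly one $\const$-link (so the surviving path is trivial and weighted by the single, irreducible symbol $\wunit$), and that the additive structure carrying $\weight{\cdot}$ does not identify a positive sum of $\wunit$'s with $0$. Everything else is a direct appeal to \autoref{lem:RIN:weight-inv:seq}, so the theorem inherits its substance from the invariance lemma already established.
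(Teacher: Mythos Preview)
Your approach is correct and matches the paper's: the paper's entire proof is the single sentence ``Immediate from \autoref{lem:RIN:weight-inv:seq}'', and you have simply unpacked that claim carefully in both directions. The one point you flag as delicate --- that the surviving residual in a normal-form addend has nonzero weight --- is exactly the step the paper leaves implicit, and your justification via the $\const$-link bounce weighted by the irreducible symbol $\wunit$ is the right one (minor quibble: in the paper's terminology $(v,v)$ is \emph{unitary}, not \emph{trivial}).
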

\begin{proof}
  Immediate from \autoref{lem:RIN:weight-inv:seq}.
\end{proof}

\begin{theorem}
\label{thm:RIN:weight-inv}
For any closed net $\sumnet: \const$ and any reduction sequence
$\rho$, $\exec{\sumnet} =_\rlstar \exec{\rho(\sumnet)}$.
\end{theorem}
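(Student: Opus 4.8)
The plan is to decompose $\exec{\sumnet} = \sum_{\pi \in \ecpaths{\sumnet}} \weight{\pi}$ according to whether each comprehensive execution path is regular, and to transport the surviving (regular) contributions across $\rho$ by the residual map. The starting observation is that, by \autoref{thm:RIN:regular}, a path $\pi \in \ecpaths{\sumnet}$ is regular (i.e.\ $\weight{\pi}\neq 0$) exactly when it is persistent; dually, every non-persistent comprehensive execution path satisfies $\weight{\pi} =_\rlstar 0$ by \autoref{lem:RIN:weight-inv:seq} (its second alternative). Hence only persistent paths contribute to the execution, on either side of $\rho$.

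First I would establish the relevant bijection. A single reduction step puts the persistent paths of a net in bijection with those of its reduct by \autoref{thm:RIN:path-red:bij}; iterating this along the sequence $\rho=(\rho_1,\ldots,\rho_m)$ yields a bijection $\Phi$ between the persistent paths of $\sumnet$ and those of $\rho(\sumnet)$, realised concretely by the residual map $\pi \mapsto \rho(\pi)$. Two points need care: that the theorem's statement over $\epaths$ suffices for the $\ecpaths$ we sum over, and that each residual is a single path rather than a formal sum. Both follow from comprehensiveness: by \autoref{lem:comprehensiveness} a persistent execution path is automatically comprehensive, so $\epaths$ and $\ecpaths$ agree on persistent paths; and in the exponential case comprehensiveness forces the unique admissible permutation (as in the proof of \autoref{thm:RIN:path-red:bij}), collapsing the sum of \autoref{eq:RIN:path-red:exp} to the single path $\Phi(\pi)$.

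It then remains to assemble the equality in $\rlstar$. Dropping the null contributions on both sides and using \autoref{lem:RIN:weight-inv:seq} (first alternative) to preserve weights along residuals, I would compute
\[
  \exec{\sumnet}
  =_\rlstar \sum_{\pi \text{ persistent}} \weight{\pi}
  =_\rlstar \sum_{\pi \text{ persistent}} \weight{\rho(\pi)}
  = \sum_{\pi' \text{ persistent}} \weight{\pi'}
  =_\rlstar \exec{\rho(\sumnet)},
\]
where the third equality is the reindexing along the bijection $\Phi$ (each persistent $\pi'$ in $\rho(\sumnet)$ is hit exactly once, since $\Phi$ is onto the persistent paths of the reduct), and the outer equalities reinstate the weight-$0$ non-persistent paths.

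The main obstacle I expect is precisely the bookkeeping of the middle paragraph: verifying that the single-step bijection of \autoref{thm:RIN:path-red:bij} composes to a genuine bijection over the whole sequence, and that comprehensiveness is stable under the residual map so that $\Phi$ stays within $\ecpaths$ at every intermediate net. The weight computation itself is then immediate from \autoref{lem:RIN:weight-inv:seq}, since the algebraic content has already been isolated there; the present theorem is essentially a reindexing argument once the bijection is secured.
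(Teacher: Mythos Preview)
Your proposal is correct and follows essentially the same route as the paper: split $\ecpaths{\sumnet}$ into persistent and non-persistent paths, discard the latter via \autoref{lem:RIN:weight-inv:seq}, and transport the former through the bijection of \autoref{thm:RIN:path-red:bij} while preserving weights. You are somewhat more explicit than the paper about iterating the single-step bijection along the sequence and about reconciling $\epaths$ with $\ecpaths$ via \autoref{lem:comprehensiveness}, but these are exactly the bookkeeping details the paper's terser proof leaves implicit.
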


\begin{proof}
  Consider a pair $\pi,\rho(\pi)$ of paths respectively belonging to 
  $\ecpaths{\sumnet}, \ecpaths{\rho(\sumnet)}$.
  Recall that, thanks to \autoref{thm:RIN:path-red:bij},
  there is a bijection between the two, so it does not matter which we choose 
  first.
  \begin{enumerate}
  \item
    If $\pi,\rho(\pi)$ are not persistent, there is nothing left to prove, 
    for their weights are both $0$, as proven
    by \autoref{lem:RIN:weight-inv:seq},
    and consequently they are neutral with respect to both
    $\exec{\sumnet}, \exec{\rho(\sumnet)}$.
  \item
    Otherwise $\pi,\rho(\pi)$ are persistent,
    so they both are regular.
    Moreover, once again by \autoref{lem:RIN:weight-inv:seq},
    $\weight{\pi} = \weight{\rho(\pi)}$.
    Hence the thesis.
  \end{enumerate}
\vspace{-1.5em}
\end{proof}

\begin{corollary}
\label{cor:RIN:path-red:number}
  For any term $\transl{t}:\const$, regular paths in 
  $\transl{t}$ are as many as (non-zero) addends in $\nf[t]$.
\end{corollary}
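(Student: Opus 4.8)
The plan is to pivot through the normal form of $\transl{t}$, where the count is immediate, and to transport it back to $\transl{t}$ via the bijection and the persistence--regularity equivalence already established. Fix a reduction sequence $\rho : \transl{t} \reduce^* \sumnet$ with $\sumnet = \nf[\transl{t}]$. Since $\transl{t}:\const$ is closed and net reduction simulates term reduction and strongly normalises, $\sumnet$ is a sum of exactly as many simple nets as there are (non-zero) addends in $\nf[t]$, and each such simple addend must be a single constant link $\constlink{v_i}$ --- the only closed normal net of type $\const$, as already used in the base case of \autoref{lem:comprehensiveness}. It therefore suffices to exhibit a bijection between the regular (comprehensive execution) paths of $\transl{t}$ and those of $\sumnet$, and to count the latter directly.

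First I would count in the normal form. Each addend $\net_i = \constlink{v_i}$ has $\ecpaths{\net_i} = \lbrace (v_i,v_i) \rbrace$: the path $(v_i,v_i)$ is straight (the twisting restriction of \autoref{def:path} explicitly excludes the $\const$-link, and the link is crossed only once, so it is not bouncing), maximal (there is no further vertex to extend it with), and vacuously comprehensive (there are no exponential premises to cross). Its weight is $w^{\permnet}((v_i,v_i)) = \const$ by the first clause of \autoref{eq:RIN:weight:unitary}, whence $\weight{(v_i,v_i)} = \const \neq 0$ and the path is regular. Thus $\sumnet$ carries exactly one regular path per addend, i.e. precisely $n$ regular paths, with $n$ the number of (non-zero) addends of $\nf[t]$.

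For the bijection I would iterate \autoref{thm:RIN:path-red:bij} along $\rho$. Each single step induces a bijection between the persistent paths in $\epaths{\cdot}$ of its source and those of its reduct; composing these step-bijections yields a bijection between the persistent paths of $\transl{t}$ and those of $\sumnet$. Here I use two identifications that hold uniformly for closed nets of type $\const$: by \autoref{lem:comprehensiveness} every persistent execution path is comprehensive, so ``persistent in $\epaths{\cdot}$'' and ``persistent in $\ecpaths{\cdot}$'' coincide at every stage; and by \autoref{thm:RIN:regular} ``persistent'' and ``regular'' coincide on $\ecpaths{\cdot}$. Consequently the composed bijection identifies the regular paths of $\transl{t}$ with the $n$ regular paths of $\sumnet$, which is the claim.

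The genuinely delicate point is precisely the passage from the single-step statement of \autoref{thm:RIN:path-red:bij} to the whole sequence. One must check that every simple net occurring as an addend along $\rho$ is again a closed net of type $\const$ --- this follows from \autoref{def:RIN} together with the fact that reduction preserves the interface, so that the theorem and \autoref{lem:comprehensiveness} legitimately apply at each stage. One must also handle that a reduct is a \emph{sum}: a step rewrites a single simple addend, acting as the identity on the execution paths living in the untouched addends and via the single-net bijection on those in the rewritten one, so that each persistent path of $\transl{t}$ is tracked to a unique addend of $\sumnet$ and the composition of the per-step bijections is well defined. These bookkeeping facts are what make the informal phrase ``iterate the bijection'' precise, and they are the only steps requiring care; the remainder is a direct chaining of the quoted results.
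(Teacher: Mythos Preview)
Your argument is correct. The paper's own proof is terser and routes the count through the execution-invariance theorem (\autoref{thm:RIN:weight-inv}): it observes that $\ecpaths{\transl{\const}}$ is a singleton, so $|\ecpaths{\transl{\nf[t]}}| = n$, and then invokes $\exec{\transl{t}} =_\rlstar \exec{\transl{\nf[t]}}$ to conclude. You instead bypass $\exec{\cdot}$ altogether, iterating the path-level bijection of \autoref{thm:RIN:path-red:bij} along a normalising sequence and then identifying persistence with regularity via \autoref{thm:RIN:regular}. Both routes are valid and ultimately rest on the same ingredients --- indeed the proof of \autoref{thm:RIN:weight-inv} already combines \autoref{thm:RIN:path-red:bij} with the weight-invariance lemma --- but they package the argument differently. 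Your version is more explicit about \emph{why} the cardinalities coincide (you literally exhibit a bijection on regular paths, and your bookkeeping about applying the single-step theorem to intermediate addends is a genuine clarification), whereas the paper's one-line appeal to $\exec{\cdot}$ leaves implicit the step from equality of executions to equality of counts, which strictly requires knowing that every regular path contributes the same nonzero weight $\wunit$.
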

\begin{proof}
  By definition of the calculus and of its nets syntax,
  $\nf[t] = n \const$, for some natural number $n$.
  Clearly, $\ecpaths{\transl{\const}}$ contains a unique path, bouncing on the 
  unique vertex of $\transl{\const}$.
  Then $|\ecpaths{\transl{\nf[t]}}| = n$.
  But from last \autoref{thm:RIN:weight-inv},
  $\exec{\transl{t}} = \exec{\transl{\nf[t]}}$,
  therefore the thesis.
\end{proof}

\section*{Conclusion}

\paragraph{Summary}
We studied the notion of path persistence in a restriction of the Resource 
Calculus (RC) showing that, in spite of the non-determinism,
the reduction induces a bijection between paths.
We defined a proper Geometry of Interaction construction that:
characterises persistence by an algebra of weights, which are 
non-deterministically assigned to paths;
is invariant under reduction;
accurately counts addends of normal forms.
In the restricted setting where we are placed, the formulation is 
considerably simpler and stronger with respect to similar works.

\paragraph{Further research} 
Future investigations may easily extend the minimalist formulation from RC to 
a PCF-like resource calculus, where the restriction to ground types remains 
innocuous although allowing a real-programming-language-class expressivity.
Directions of ongoing investigation by the author includes 
the study of the connection between Taylor-Ehrhard expansion and GoI, 
exploiting the resource construction hereby presented.
This could offer a technique to represent approximations of 
infinite, but still meaningful, paths in a $\lambda$-term,
as in the spirit of B\"{o}hm trees.
Indeed, paths, expansion and B\"{o}hm trees, they all intimately share a 
particular strategy of computation, that is the head reduction.
Lastly, a deep study of paths in presence of both superposition 
and duplication, i.e. in the full differential $\lambda$-calculus,
is still missing.
In such a case, the shape of persistent crossings of an exponential redex 
does not necessarily respect the definition we gave by mean of fixed 
permutations, because different copies of the redex may want different 
resource assignments.
\bibliographystyle{eptcs}
\bibliography{Bibliography}
\end{document}